\renewcommand*\env@matrix[1][*\c@MaxMatrixCols c]{%
  \hskip -\arraycolsep
  \let\@ifnextchar\new@ifnextchar
  \array{#1}}
\tikzset{ampersand replacement=\&}
\newtheorem{thm}{Theorem}
\newtheorem{dfn}[thm]{Definition}
\newtheorem{prp}[thm]{Proposition}
\theoremstyle{remark}
  \newcommand{\del}{\partial}
  \newcommand{\oo}{\infty}
  \newcommand{\eps}{\varepsilon}
  \newcommand{\la}{\lambda}
  \newcommand{\ka}{\kappa}
  \newcommand{\tr}{\operatorname{tr}}
  \newcommand{\bnabla}{\bar{\nabla}}
  \newcommand{\sgn}{\operatorname{sgn}}
  \newcommand{\M}{\mathcal{M}}
\renewcommand{\S}{\mathcal{S}}
\title{IDEAL characterization of higher dimensional spherically symmetric black holes}
\author{Igor Khavkine\\[0.5ex]
	Institute of Mathematics, Czech Academy of Sciences,\\
	\v{Z}itn{\'a} 25, 115 67 Praha 1, Czech Republic\\[0.5ex]
	\texttt{khavkine@math.cas.cz}}
\date{October 6, 2018}
\begin{document}
\maketitle

\begin{abstract}
In general relativity, an IDEAL (Intrinsic, Deductive, Explicit,
ALgorithmic) characterization of a reference spacetime metric $g_0$
consists of a set of tensorial equations $T[g]=0$, constructed
covariantly out of the metric $g$, its Riemann curvature and their
derivatives, that are satisfied if and only if $g$ is locally isometric
to the reference spacetime metric $g_0$. We give the first IDEAL
characterization of generalized Schwarzschild-Tangherlini spacetimes,
which consist of $\Lambda$-vacuum extensions of higher dimensional
spherically symmetric black holes, as well as their versions where
spheres are replaced by flat or hyperbolic spaces. The standard
Schwarzschild black hole has been previously characterized in the work
of Ferrando and S\'aez, but using methods highly specific to
$4$ dimensions. Specialized to $4$ dimensions, our result provides an
independent, alternative characterization. We also give a proof of a
version of Birkhoff's theorem that is applicable also on neighborhoods
of horizon and horizon bifurcation points, which is necessary for our
arguments.
\end{abstract}

\section{Introduction}

In this work, we are interested in an intrinsic characterization of
higher dimensional generalizations of the Schwarzschild black hole.
These spherically symmetric, asymptotically flat vacuum spacetimes were
first studied by Tangherlini~\cite{tangherlini}. By a spacetime
$(\M,g)$, we mean a smooth manifold $\M$ with a Lorentzian metric $g$,
though a similar discussion can be carried out for any pseudo-Riemannian
geometry. While ``intrinsic'' generally does preclude direct reference
to the form of the spacetime metric in a special coordinate system, it
is a vague enough term to have multiple interpretations. To be specific,
we refer to an \emph{IDEAL}%
	\footnote{The acronym, explained in~\cite{fs-sphsym} (footnote, p.2),
	stands for Intrinsic, Deductive, Explicit and ALgorithmic.} %
or \emph{Rainich-type} characterization that has been used, for
instance, in the works~\cite{rainich, takeno, coll-ferrando,
fs-schw, fs-kerr, fs-sphsym, gpgl-schw, gpgl-kerr, krongos-torre, cdk}. It consists of
a list of tensorial equations ($T_k[g] = 0$, $a=1,2,\ldots,N$),
constructed covariantly out of the metric ($g$) and its derivatives
(concomitants of the Riemann tensor) that are satisfied if and only if
the given spacetime locally belongs to the desired class, possibly
narrow enough to be the isometry class of a single reference spacetime
geometry. This notion has a natural generalization ($T_k[g,\phi] = 0$)
to spacetimes equipped with scalar or tensor fields ($\phi$), with
equivalence still given by isometric diffeomorphisms that also transform
the additional scalars or tensors into each other, though we will not
make use of this generalization here. A nice historical survey of this
and other local characterization results can be found
in~\cite{mars-local}.

An IDEAL characterization requires neither the existence of any extra
geometric structures, nor the translation of the metric and of the
curvature into a frame formalism. Thus, it is an alternative to the
Cartan-Karlhede characterization~\cite[Ch.9]{stephani-sols}, which is
based on Cartan's moving frame formalism. Intrinsic characterizations, of
various types, have been of long standing and independent interest in
geometry and General Relativity. But, in addition, they can be helpful
in deciding when a metric, given for instance by some complicated
coordinate formulas, corresponds to one that is already known. In this
regard, an IDEAL characterization is especially helpful if one would
like to find an algorithmic solution to this recognition problem. In
numerical relativity, the near-satisfaction of the tensor equations
$T_k[g]\approx 0$ may signal the local proximity of a numerical
spacetime to a desired reference geometry. In addition, the approach to
zero $T_k[g] \to 0$ could be used to study either linear or nonlinear
stability of reference geometries, in an unambiguous and gauge
independent way.

The following additional application should be noted. By the
Stewart-Walker lemma~\cite[Lem.2.2]{stewart-walker}, the vanishing of a
tensor concomitant $T_k[g] = 0$ for a metric $g$ implies that its
linearization $\dot{T}_k[h]$ ($T_k[g+\eps h] = T_k[g] + \eps
\dot{T}_k[g] + O(\eps^2)$) is invariant under linearized
diffeomorphisms. Thus, any quantity of the form $\dot{T}_k[h]$ defines a
gauge-invariant observable in linearized gravity, when Einstein or
Einstein-matter equations are linearly perturbed about a background
solution $g$. The linearized Einstein tensor $\dot{G}[h]$ itself is of
course an example of a linear gauge-invariant, when linearized about a vacuum
solution, $G[g]=0$. In the presence of matter fields $\phi$, it is the
whole linearised Einstein equation $\dot{G}[h] - 8\pi\dot{T}[h,\psi]$
that is linearly gauge-invariant, when linearized about a joint solution of
$G[g]=8\pi T[g,\phi]$ and the matter equations of motion. And by
invariant, of course, we mean with respect to the action of linearized
diffeomorphisms on both the metric $g$ and the matter fields $\phi$.
A straightforward (though heuristic) argument shows that an
IDEAL characterization of a local isometry class provides a list
$\dot{T}_k[h]$, $k=1,\ldots,N$, of gauge-invariant observables that
should also complete: the joint kernel of $\dot{T}_k$ should coincide
with the tangent space to the isometry orbit (to make this argument
completely rigorous, it suffices to check that $T_k[g+h]$ do not
approach zero at $O(h^2)$ or higher order). That is, the joint kernel of
$\dot{T}_k[h]=0$ locally consists only of pure gauge modes
($h=\mathcal{L}_v g$ for some vector field $v$). The use of such local
observables (given by differential operators) can be advantageous both
in theoretical and practical investigations of classical and quantum
field theoretical models because they cleanly separate the local (or
ultraviolet) and global (or infrared) aspects of the theory. This may be
of interest in the problem of reconstructing the metric of a linear
gravitational wave from its complete set of gauge-invariant
observables~\cite{mobpm-kerr}, or in the problem of determining the
decay properties of linear gravitational waves in a gauge-independent
way~\cite{dotti-schw}.

In this work, we add the family of \emph{generalized
Schwarzschild-Tangherlini geometries} to the (unfortunately still small,
but slowly growing) literature concerning IDEAL characterizations of
isometry classes of individual reference geometries. That family
consists of all $\Lambda$-vacuum $2+m$-warped products, where the warped
$m$-dimensional factor is maximally symmetric. When the latter factor is
a round sphere, we recover the asymptotically flat or (anti-)de~Sitter
generalization of the Schwarzschild-Tangherlini black
holes~\cite{tangherlini, ki-master}. Replacing the sphere by flat
Euclidean space, we get the higher dimensional generalizations of
\emph{Taub's plane symmetric} spacetimes~\cite{taub}. Replacing the
sphere by hyperbolic space, we obtain so-called
\emph{pseudo-Schwarzschild wormhole} spacetimes~\cite{lobo-mimoso}.
Specializing to $2+m=4$ and $\Lambda=0$, we get the so-called family of
\emph{$A$-metrics} of Ehlers and Kundt~\cite{ehlers-kundt}.
Other IDEAL characterizations for geometries of interest in General
Relativity include ($4$-dimensional) Schwarzschild~\cite{fs-schw, gpgl-schw},
Reissner-Nordstr\"om~\cite{fs-typeD}, Kerr~\cite{fs-kerr, gpgl-kerr},
Lema\^itre-Tolman-Bondi~\cite{fs-sphsym}, Stephani
universes~\cite{fs-ricci} (see references for complete lists and
details), and most recently FLRW and inflationary spacetimes (in any
dimension)~\cite{cdk}. Of course, for completeness, we have to mention
the classic cases of constant curvature spaces (cf.~\eqref{eq:cc-riem}),
which are known to be fully characterized by the structure of the
Riemann tensor (by theorems of Riemann and Killing-Hopf~\cite{wolf-cc}).

For definiteness, let us state what we mean by a
\emph{local isometry} and \emph{local isometry class}.

\begin{dfn}[locally isometric] \label{def:loc-isom}
A pseudo-Riemannian geometry $(\M_1,g_1)$ is \emph{locally isometric at
$x_1\in \M_1$ to} a pseudo-Riemannian geometry $(\M_2,g_2)$ \emph{at
$x_2 \in \M_2$} if there exist open neighbourhoods $U_1 \ni x_1$, $U_2
\ni x_2$ and a diffeomorphism $\chi\colon U_1 \to U_2$ such that
$\chi(x_1) = x_2$ and $\chi^* g_2 = g_1$. If we can choose $U_1 = \M_1$
and $U_2 = \M_2$ then they are \emph{(globally) isometric}. If for every
$x_1\in \M_1$ there is $x_2 \in \M_2$ such that $(\M_1,g_1)$ at $x_1$ is
locally isometric to $(\M_2,g_2)$ at $x_2$, we simply say that
$(\M_1,g_1)$ is \emph{locally isometric to} $(\M_2,g_2)$ (note the
asymmetry in the definition). If $(\M_1,g_1)$ is locally isometric to
$(\M_2,g_2)$, as well as vice versa, we say that they are \emph{locally
isometric to each other} (which constitutes an equivalence relation).
All pseudo-Riemannian geometries that are locally isometric to a
reference $(\M,g)$ constitute its \emph{local isometry class}.
\end{dfn}

The synopsis of the paper is the following: In Section~\ref{sec:warped}
we define and exhibit the main geometric features of $2+m$-warped
product geometries. Proposition~\ref{prp:wp-char} gives a geometric
characterization of $2+m$-warped products in terms of a symmetric
projector whose covariant derivative satisfies a special constraint. In
Section~\ref{sec:gST} we introduce the family of \emph{generalized
Schwarzschild-Tangherlini} (gST) geometries, with special attention to the
structure of their Riemann curvature. Section~\ref{sec:birkhoff} states
and proves a version of Birkhoff's theorem, according to which a locally
maximally symmetric $2+m$-warped product that is also a $\Lambda$-vacuum
must locally coincide with one of the gST geometries. The main reason to
include a proof is to pay special attention to the applicability of this
result to points lying on a (Killing) horizon. Finally,
Theorem~\ref{thm:ideal-gST} in Section~\ref{sec:ideal} puts all the
pieces together to give an IDEAL characterization of the local isometry
classes of the gST geometries. Due to a quirk of the structure of the
gST Riemann curvature in $n=4$ dimensions, the final result looks
slightly different in $n=4$ and $n\ge 5$ dimensions. This difference is
accounted for by Theorem~\ref{thm:ideal-gST-4d}. In the case of
spherical symmetry in $n=4$ dimensions with $\Lambda=0$, our results
provide an independent alternative characterization of the standard
Schwarzschild spacetime, which was first characterized
in~\cite{fs-schw}. All other instances of the results from
Section~\ref{sec:ideal} are new. Finally, in Section~\ref{sec:discuss},
we conclude with a discussion of our results and of directions for
future work.

Throughout the paper we follow the conventions of~\cite{wald-gr}:
$({-}{+} \cdots {+})$ for Lorentzian signature, and $2\nabla_{[\mu}
\nabla_{\nu]} \omega_\la = R_{\mu\nu\la}{}^\ka \omega_\ka$ for
curvature. Unless otherwise specified, all functions will be considered
$C^\oo$ smooth.

\section{$2+m$-warped products} \label{sec:warped}

Below, we consider $2+m$-warped product geometries. That is,
pseudo-Riemannian geometries on an $n$-dimensional manifold, which can
be represented as a warped product of a $2$-dimensional and an
$m$-dimensional geometry. Our main family of examples consists of the
\emph{generalized Schwarzschild-Tangherlini} spacetimes
(Section~\ref{sec:gST}), which includes the spherically symmetric black
holes in four and higher dimensions. We will discuss the structure of
Riemann curvature tensor of the warped product and the consequences for
the $2$-dimensional base factor when the product satisfies the Einstein
equation (\emph{Birkhoff's theorem}, Section~\ref{sec:birkhoff}).

\begin{dfn}[warped product]
A pseudo-Riemannian geometry $(\bar{\M},\bar{g}) \cong (\M,g) \times_r
(\S,\Omega)$ is a \emph{warped product} with \emph{warping function $r$}
when $\bar{\M} \cong \M \times \S$ and the metric can be written as
\begin{equation} \label{eq:wp}
	\bar{g} = g + r^2 \Omega,
\end{equation}
where the metric tensors $g$ and $\Omega$
are lifted to the product space by pulling back along the projections
$\bar{\M} \to \M$ and $\bar{\M} \to \S$, while $r$ is the pullback of a
nowhere vanishing function on $\M$. We call $(\S,\Omega)$ the
\emph{warped factor} and $(\M,g)$ the \emph{base factor}.
\end{dfn}

Let us now introduce some notational conventions that will simplify
subsequent discussions. Denote by $\bnabla_\mu$, $\nabla_a$ and $D_A$
the canonical Levi-Civita connections on $(\bar{\M},\bar{g}) \cong
(\M,g) \times_r (\S,\Omega)$, $(\M,g)$ and $(\S,\Omega)$, respectively.
We will use Greek indices $(\alpha\beta\cdots)$ for tensors on
$\bar{\M}$, lower case Latin indices $(ab\cdots)$ for tensors on $\M$,
and upper case Latin indices $(AB\cdots)$ on $\S$. Using the product
structure $\bar{\M} \cong \M \times \S$, any tensor or differential
operator on $\M$ or $\S$ can be canonically transferred to $\bar{\M}$.
We will do so for the usual Riemann and Ricci curvature tensors and the
derivative of the warping function:
\begin{equation}
\begin{gathered}
	R_{abcd}[g], R_{ab}[g], R[g] \to R_{\mu\nu\la\ka}, R_{\mu\nu}, R ,
		\quad
	R_{ABCD}[\Omega], R_{AB}[\Omega], R[\Omega] \to S_{\mu\nu\la\ka}, S_{\mu\nu}, S ,
		\\
	r_a = \nabla_a r \to r_\mu = \nabla_\mu r .
\end{gathered}
\end{equation}
We will also use the obvious notation $\bar{R}_{\mu\nu\la\ka} =
R_{\mu\nu\la\ka}[\bar{g}]$, $\bar{R}_{\mu\nu} = R_{\mu\nu}[\bar{g}]$,
$\bar{R} = R[\bar{g}]$. We will use the following self-explanatory
convention when raising and lowering indices on tensors transferred to
$\bar{\M}$ from one of the factors: $\bar{g}^{\mu\nu} g_{\nu\la} =
g^{\mu}{}_{\la}$, but $\bar{g}^{\mu\nu} r^2 \Omega_{\nu\la}
\bar{g}^{\la\ka} = r^{-2} \Omega^{\mu\ka}$.

The covariant derivative on the warped product geometry acts as
\begin{equation}
	\bnabla_\mu X_\nu = \nabla_\mu X_\nu + D_\mu X_\nu
		- 2 X^\la (r^2 \Omega)_{\la(\nu} \bnabla_{\mu)} \log r
		+ X^\la (\bnabla_\la \log r)(r^2 \Omega)_{\mu\nu} ,
\end{equation}
where the action of $\bnabla$ on scalars is just through the exterior
derivative. In particular, we get
\begin{equation} \label{eq:wp-gradg}
	\bnabla_\mu g_{\nu\la}
	= 2 \frac{r^2 \Omega_{\mu(\nu} r_{\la)}}{r} ,
	\quad \text{and} \quad
	\bnabla_\mu (r^2 \Omega)_{\nu\la}
	= -2 \frac{r^2 \Omega_{\mu(\nu} r_{\la)}}{r} .
\end{equation}
The curvature tensors are given by
\begin{align}
\label{eq:wp-riem}
	\bar{R}_{\mu\nu\la\ka}
	&= r^2 S_{\mu\nu\la\ka}
		+ R_{\mu\nu\la\ka}
		- \left(\frac{\nabla\nabla r}{r} \odot r^2 \Omega\right)_{\mu\nu\la\ka}
		- \frac{r_\sigma r^\sigma}{2r^2} (r^2\Omega \odot r^2\Omega)_{\mu\nu\la\ka} , \\
\label{eq:wp-ricc}
	\bar{R}_{\mu\nu}
	&= S_{\mu\nu} + R_{\mu\nu}
		- m\frac{\nabla_\mu\nabla_\nu r}{r} - \frac{\square r}{r} r^2\Omega_{\mu\nu}
		- (m-1) \frac{r_\sigma r^\sigma}{r^2} r^2 \Omega_{\mu\nu} , \\
	\bar{R}
	&= \frac{S}{r^2} + R - 2m \frac{\square r}{r} - m(m-1) \frac{r_\sigma r^\sigma}{r^2} ,
\end{align}
where for convenience we have introduced the \emph{Kulkarni-Nomizu
product} of symmetric tensors:
\begin{equation}\label{kn-product}
	(A \odot B)_{\mu\nu\la\ka}
	= A_{\mu\la} B_{\nu\ka} - A_{\nu\la} B_{\mu\ka}
	- A_{\mu\ka} B_{\nu\la} + A_{\nu\ka} B_{\mu\la} .
\end{equation}
Note that we could rewrite the $\nabla$-derivatives using
$\bnabla$-derivatives in the above formulas with the help of the
identity
\begin{equation} \label{eq:ddr}
	\frac{\nabla_\mu \nabla_\nu r}{r}
	= \frac{\bnabla_\mu \bnabla_\nu r}{r}
		- \frac{r_\la r^\la}{r^2} (r^2 \Omega)_{\mu\nu}
	= \bnabla_\mu \frac{\bnabla_\nu r}{r}
		+ \frac{r_\mu r_\nu}{r^2}
		- \frac{r_\la r^\la}{r^2} (r^2\Omega)_{\mu\nu} .
\end{equation}

These formulas can be extracted from~\cite[Prp.7.42]{oneill}. But the
quickest way check them is to notice that $\bar{g}_{\mu\nu} = r^2
(r^{-2} g_{\mu\nu} + \Omega_{\mu\nu})$, in which a pair of nested
conformal transformation relates $\bar{g}$ and a product metric, and use
the standard formulas for the conformal transformations of covariant
derivatives and curvatures~\cite[App.D]{wald-gr}.

\begin{prp}[\cite{fs-2+2}, {\cite[Thm.16(2)]{gpgl-biconf}}] \label{prp:wp-char}
A pseudo-Riemannian geometry $(\bar{\mathcal{M}}, \bar{g})$ can be
locally put into the form of a $2+m$-warped product
\begin{equation}
	\bar{g}_{\mu\nu} = g_{\mu\nu} + r^2 \Omega_{\mu\nu}
\end{equation}
iff there exist a $1$-form $\ell_\mu$ and a symmetric tensor
$\bar{\Omega}_{\mu\nu} = \bar{\Omega}_{(\mu\nu)}$ that together satisfy
the following conditions
\begin{equation} \label{eq:wp-char}
\begin{gathered}
	\bnabla_{[\mu} \ell_{\nu]} = 0 ,
		\quad
	\ell^\mu \bar{\Omega}_{\mu\nu} = 0 ,
		\\
	\bar{\Omega}_{\mu}{}^\nu \bar{\Omega}_{\nu\la} = \bar{\Omega}_{\mu\la} ,
		\quad
	\bar{\Omega}_\mu{}^\mu = m ,
		\quad
	\bnabla_\mu \bar{\Omega}_{\nu\la} = -2 \bar{\Omega}_{\mu(\nu} \ell_{\la)} .
\end{gathered}
\end{equation}
Then we can choose $g_{\mu\nu} = \bar{g}_{\mu\nu} -
\bar{\Omega}_{\mu\nu}$ and $\Omega_{\mu\nu} = r^{-2} \bar{\Omega}$, with
$r$ satisfying $\bnabla_\mu \log |r| = \ell_\mu$.
\end{prp}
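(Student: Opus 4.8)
The plan is to prove the two directions of \eqref{eq:wp-char} separately, with the ``only if'' direction being an immediate consequence of the warped-product identities already recorded and the ``if'' direction carrying all the work. For the forward direction I would take $\bar\Omega_{\mu\nu} := (r^2\Omega)_{\mu\nu}$ and $\ell_\mu := \bnabla_\mu\log|r| = r_\mu/r$, and then verify \eqref{eq:wp-char} termwise: $\ell$ is exact, hence closed; $\ell^\mu\bar\Omega_{\mu\nu}=0$ since $r_\mu$ is tangent to the base while $\bar\Omega$ is supported on the warped factor; by the index conventions fixed above $\bar\Omega_\mu{}^\nu$ is exactly the $\bar g$-orthogonal projector onto the warped factor, so it is idempotent of trace $m$; and the last identity is precisely the second equation of \eqref{eq:wp-gradg}.

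The substance is the converse. Assume $(\ell_\mu,\bar\Omega_{\mu\nu})$ satisfy \eqref{eq:wp-char}. Since $\ell$ is closed, the Poincar\'e lemma yields locally $\ell_\mu = \bnabla_\mu\log|r|$ for a nowhere-vanishing $r$, my candidate warping function. The idempotency $\bar\Omega_\mu{}^\nu\bar\Omega_\nu{}^\la = \bar\Omega_\mu{}^\la$ together with $\bar\Omega_\mu{}^\mu=m$ says that $P_\mu{}^\nu := \bar\Omega_\mu{}^\nu$ is a smooth projector of constant rank $m$, splitting $T\bar\M = H\oplus V$ with $V=\operatorname{im}P$ ($\dim V=m$) and $H=\ker P$ ($\dim H=2$). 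The symmetry of $\bar\Omega$ and $\ell^\mu\bar\Omega_{\mu\nu}=0$ then show that $\ell$ is horizontal and that $H$ and $V$ are $\bar g$-orthogonal, so $H=V^\perp$ and $\bar g$ restricts nondegenerately to each.

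The heart of the argument is integrability, which I would draw entirely from the last equation of \eqref{eq:wp-char}: raising one index gives $\bnabla_\mu P_\nu{}^\la = -\bar\Omega_{\mu\nu}\ell^\la - P_\mu{}^\la\ell_\nu$. Computing the Frobenius obstructions $(I-P)[X,Y]$ for vertical $X,Y$ and $P[X,Y]$ for horizontal $X,Y$ via $[X,Y]=\bnabla_XY-\bnabla_YX$, each reduces to a contraction of $\bnabla P$ with a projector: the vertical one vanishes because the surviving term is proportional to $(X^\mu Y^\nu - Y^\mu X^\nu)\bar\Omega_{\mu\nu}=0$ by symmetry, the horizontal one because $\bar\Omega_{\mu\nu}X^\mu = P_\mu{}^\la X^\mu = 0$ for horizontal $X$. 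Hence both $H$ and $V$ are integrable; being complementary they admit a joint adapted chart $(x^a,y^A)$ with $H=\operatorname{span}\{\partial_{x^a}\}$, $V=\operatorname{span}\{\partial_{y^A}\}$, in which $H\perp V$ makes $\bar g$ block diagonal, $\bar g = g_{ab}\,dx^adx^b + \bar\Omega_{AB}\,dy^Ady^B$ with $g_{\mu\nu}=\bar g_{\mu\nu}-\bar\Omega_{\mu\nu}$.

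It remains to show each block is basic, and here I expect the main obstacle: the vertical derivative of $g_{ab}$ vanishes only tautologically in block form, so one must read off the right component of \eqref{eq:wp-char}. Evaluating its last equation on the index patterns $(a,b,C)$ and $(B,a,C)$ gives $\bar\Gamma^D_{ab}=0$ and $\bar\Gamma^E_{Ba}=\ell_a\delta^E_B$ respectively; matching these against the Christoffel symbols of the block-diagonal $\bar g$ yields $\partial_A g_{ab}=0$ and $\partial_a\bar\Omega_{AB}=2\ell_a\bar\Omega_{AB}$. The first shows $g_{ab}=g_{ab}(x)$ descends to the base. The second, combined with the component $\partial_A\ell_a=0$ of $\bnabla_{[\mu}\ell_{\nu]}=0$ and with $\ell_a=\partial_a\log|r|$, integrates to $\bar\Omega_{AB}=r(x)^2\,\Omega_{AB}(y)$ with $\Omega$ a tensor on the fibre. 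Since $r$ is already basic, this exhibits $\bar g=g+r^2\Omega$ as a warped product with $g=\bar g-\bar\Omega$ and $\Omega=r^{-2}\bar\Omega$, completing the converse. Everything apart from identifying $\bar\Gamma^D_{ab}=0$ with the $y$-independence of $g_{ab}$ is either the Poincar\'e lemma, projector linear algebra, or the Frobenius criterion.
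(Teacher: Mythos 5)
Your proof is correct, but the converse direction follows a genuinely different route from the paper's. The paper's proof exploits the observation that a warped product is conformal to a direct product: it sets $\hat{g}_{\mu\nu} = r^{-2}\bar{g}_{\mu\nu}$, checks that the hypotheses on $(\ell_\mu,\bar{\Omega}_{\mu\nu})$ translate into $\Omega_{\mu\nu}=r^{-2}\bar{\Omega}_{\mu\nu}$ being a rank-$m$ symmetric projector that is \emph{covariantly constant} for the Levi-Civita connection $\hat{\nabla}$ of $\hat{g}$, and then invokes the classical de~Rham-type decomposition theorem (as in \cite{fs-2+2}) to conclude that $\hat{g}$ is locally a $2+m$ direct product, before undoing the conformal rescaling. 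You instead work directly on $(\bar{\M},\bar{g})$: you verify the Frobenius integrability of $\ker P$ and $\operatorname{im}P$ from the identity $\bnabla_\mu \bar{\Omega}_{\nu\la} = -2\bar{\Omega}_{\mu(\nu}\ell_{\la)}$, pass to a jointly adapted block-diagonal chart, and then extract $\del_A g_{ab}=0$ and $\del_a\bar{\Omega}_{AB}=2\ell_a\bar{\Omega}_{AB}$ from the Christoffel symbols to show each block is basic up to the warping factor $r^2$. In effect you re-prove, in the conformally rescaled picture, the decomposition theorem that the paper cites; your argument is more self-contained (and your Christoffel computation makes explicit exactly where each hypothesis is used), while the paper's conformal trick is shorter and delegates the integrability argument to a known result. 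One small point worth spelling out in your Frobenius step for vertical $X,Y$: the term $(X^\mu Y^\nu - Y^\mu X^\nu)P_\mu{}^\la \ell_\nu$ does not vanish by symmetry but because $\ell_\nu Y^\nu = 0$ for vertical $Y$, which is a consequence of $\ell^\mu\bar{\Omega}_{\mu\nu}=0$ together with the symmetry of $\bar{\Omega}$; your phrase ``the surviving term'' glosses over this, though the conclusion is right.
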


\begin{proof}
In one direction, starting with the definitions of
$\bar{\Omega}_{\mu\nu}$ and $\ell_\mu$ in terms of $r$ and
$\Omega_{\mu\nu}$, verifying the above identities is a matter of direct
calculation (cf. Equation~\eqref{eq:wp-gradg}).

In the other direction, the key observation is that a warped product
metric is conformal to a direct product metric, namely $\bar{g}_{\mu\nu}
= r^2 \hat{g}_{\mu\nu}$, with $\hat{g}_{\mu\nu} = r^{-2} g_{\mu\nu} +
\Omega_{\mu\nu}$, where the warping function $r^2$ plays the role of the
conformal factor. Given our first condition on $\ell_\mu$, and since we
are working locally, we can always find a smooth function $r$ satisfying
$\bnabla_\mu \log |r| = \bnabla_\mu r/r = \ell_\mu$. Again locally, we
can choose $r$ to be nowhere vanishing. The choice is unique, up to a
multiplicative constant.

Define $\hat{g}_{\mu\nu} = r^{-2} \bar{g}_{\mu\nu}$, $\hat{g}^{\mu\nu}$
its inverse, $\Omega_{\mu\nu} = r^{-2} \bar{\Omega}_{\mu\nu}$ and let
$\hat{\nabla}$ be the $\hat{g}$-compatible Levi-Civita connection. A
straightforward calculation shows that our conditions on $\bar{\Omega}$
translate to
\begin{equation}
	\Omega_{\mu\nu} \hat{g}^{\nu\la} \Omega_{\la\ka} = \Omega_{\mu\ka} ,
	\quad
	\hat{g}^{\mu\nu} \Omega_{\mu\nu} = m ,
	\quad \text{and} \quad
	\hat{\nabla}_\mu \Omega_{\nu\la} = 0 .
\end{equation}
It is well known~\cite{fs-2+2} that the existence of such a rank-$m$
$\hat{\nabla}$-covariantly constant symmetric projector
$\Omega_{\mu\nu}$ implies that $\hat{g}_{\mu\nu}$ can be locally put
into $2+m$-product form $\hat{g}_{\mu\nu} = (\hat{g}_{\mu\nu} -
\Omega_{\mu\nu}) + \Omega_{\mu\nu}$. Our second condition on $\ell_\mu$
implies that $r$ does not depend on the $\Omega$-factor. Thus, without
loss of generality, we can define $g_{\mu\nu} = r^2 (\hat{g}_{\mu\nu} -
\Omega_{\mu\nu})$ and write the product form as $\hat{g}_{\mu\nu} =
r^{-2} g_{\mu\nu} + \Omega_{\mu\nu}$.

Undoing the conformal transformation, we end up with the desired local
$2+m$-warped product form $\bar{g}_{\mu\nu} = g_{\mu\nu} + r^2
\Omega_{\mu\nu}$.
\end{proof}

\subsection{Generalized Schwarzschild-Tangherlini geometries} \label{sec:gST}

Consider an integer $n\ge 4$ and a triple of real numbers
$(\alpha,M,\Lambda)$, where $M\ne 0$. The $2$-dimensional metric
\begin{equation} \label{eq:schw-f}
	g_{ab} = -f dt_a dt_b + \frac{1}{f} dr_a dr_b ,
	\quad
	f(r) = \alpha - \frac{2M}{r^{n-3}} - \frac{2\Lambda}{(n-1)(n-2)} r^2 ,
\end{equation}
is well-defined and Lorentzian in the interiors of the $r$-intervals
separated by $r=0$ and the roots of $f(r)=0$. It is well-known that each
one of these regions has a unique maximal analytic, connected and
simply-connected extension~\cite{lake, schleich-witt}. Each region with
$r>0$ generates the same extension (topologically $\mathbb{R}^2$), and
similarly for each region with $r<0$. When $n$ is even, the extensions
with $r>0$ and $r<0$ are distinct. However, the $r<0$ extension is
isometric to the $r>0$ extension with $M$ replaced by $-M$, by sending
$r\mapsto -r$. When $n$ is odd, the extensions with $r>0$ and $r<0$ are
isometric, again by sending $r\mapsto -r$, but the geometry obtained by
replacing $M$ by $-M$ is different. Thus, for book keeping convenience,
let us denote by $(\M,g)_{n,\alpha,M,\Lambda}$ the disjoint union of the
$r>0$ and $r<0$ extensions with the same $M$ parameter when $n$ is even,
and the disjoint union of the $r>0$ extension with parameter $M$ and the
$r<0$ extension with parameter $-M$ when $n$ is odd. In either case, $\M
\cong \mathbb{R}^2 \sqcup \mathbb{R}^2$. Naturally, by our construction,
each of these maximally extended geometries is accompanied by the
distinguished scalar function $r$, taking on all non-zero real values,
which was analytically extended along with the metric.

The precise way in which the $(t,r)$ charts are glued together along
\emph{horizons} and \emph{horizon bifurcation} points to form the
analytic extension can be glimpsed from Proposition~\ref{prp:birkhoff},
where (a) corresponds to a \emph{generic} points covered by an $(t,r)$
chart, (b) corresponds to a \emph{horizon} points, and (c) corresponds
to a \emph{horizon bifurcation} point of the extension. The gluing is
done with the help of the \emph{tortoise coordinate} $r_*$
from~\eqref{eq:r-tortoise}. Penrose conformal diagrams for the
extensions can be found in~\cite{lake, schleich-witt}.

Recall that $(\S,\Omega)$ is of \emph{constant
curvature}~\cite{wolf-cc}, with sectional curvature $\alpha$, if its
Riemann curvature tensor is
\begin{equation} \label{eq:cc-riem}
	R_{ABCD}[\Omega]
	= \frac{\alpha}{2} (\Omega \odot \Omega)_{ABCD} .
\end{equation}
When an $m$-dimensional Riemannian geometry $(\S,\Omega)$ is simply
connected, geodesically complete and of constant curvature it can only
be one of the following~\cite[Sec.2.4]{wolf-cc}: Euclidean $m$-space,
round $m$-sphere, hyperbolic $m$-space. These are called \emph{maximally
symmetric} spaces. Let us denote the corresponding maximally symmetric
space with sectional curvature $\alpha$ by $(\S,\Omega)_{m,\alpha}$.

\begin{dfn}[generalized Schwarzschild-Tangherlini spacetime] \label{def:gST}
Fix a dimension $m\ge 2$ and a triple of real numbers
$(\alpha,M,\Lambda)$, with $M\ne 0$. Set $n=2+m$ and denote the warped
product $(\bar{\M}, \bar{g})_{\alpha,M,\Lambda} \cong
(\M,g)_{(n,\alpha,M,\Lambda)} \times_r (\S,\Omega)_{m,\alpha}$, where the
base factor and the warping function $r$ are defined in the discussion
following~\eqref{eq:schw-f}, and the warped factor is the
$m$-dimensional maximally symmetric space of sectional curvature
$\alpha$. We call $(\bar{\M},\bar{g})_{\alpha,M,\Lambda}$ a
$n$-dimensional \emph{generalized Schwarzschild-Tangherlini (gST)
spacetime}.
\end{dfn}

If we had included the $M=0$ cases, then each such geometry would
correspond to a particular representation of a subset of a maximally
symmetric geometry (de~Sitter or anti-de~Sitter spacetime), as we will
see shortly (Equation~\eqref{eq:gST-riem}). Since this case has already
been extensively studied (e.g.,\ see our previous
works~\cite{kh-calabi,cdk}), we exclude it from consideration.

For tensors with two or four indices, we define contractions
\begin{equation} \label{eq:dot-con}
	(A\cdot B)_{\mu\nu}
	= A_{\mu}{}^\lambda B_{\lambda\nu} ,
	\quad \text{and} \quad
	(R \cdot S)_{\mu\nu\lambda\kappa}
	= R_{\mu\nu}{}^{\sigma\tau} S_{\sigma\tau\lambda\kappa} .
\end{equation}
Recalling also the definition of the Kulkarni-Nomizu
product~\eqref{kn-product}, when $A$, $B$, $C$ and $D$ are symmetric, we
have the useful identities
\begin{gather}
	[(A\odot B) \cdot (C\odot D)]_{\mu\nu\lambda\kappa}
	= 2[(A\cdot C)\odot(B\cdot D) + (A\cdot D)\odot(B\cdot
	C)]_{\mu\nu\lambda\kappa} , \\
	(A\odot B)_{\mu\nu}{}^\nu{}_\kappa
	= [A\cdot B - (\tr A) B - A (\tr B) + B\cdot A]_{\mu\kappa} .
\end{gather}

Now, we compute the curvature of the gST geometries that we have defined
above. Let us start with the $2$-dimensional $(\M,g)$ factor. We
basically follow the presentation from~\cite{martel-poisson}. Working in
the $(t,r)$ chart, clearly $t^a = (\del_t)^a$ is a timelike Killing
vector. For convenience, we also introduce the notation $t_a = g_{ab}
t^b = -f dt_a$ and $r_a = dr_a$. They are related as $t^a = -\eps^{ab}
r_b$, where $\eps_{ab} = (dt \wedge dr)_{ab}$. Then, of course, $r_a r^a
= f$ and $t_a t^a = -1/f$. The action of the covariant derivative is
summarized by
\begin{equation} \label{eq:schw-cov-frame}
	\nabla_a t_b = \frac{f_1}{2r} \eps_{ab}
	\quad \text{and} \quad
	\nabla_a r_b = \frac{f_1}{2r} g_{ab} .
\end{equation}

For the record, let us write out in full the following identities for
$(\M,g)$:
\begin{align}
	\frac{r_a r^a}{r^2}
	&= \frac{f}{r^2}
	= \frac{\alpha}{r^2} - 2\frac{M}{r^{n-1}} - \frac{2\Lambda}{(n-1)(n-2)} , \\
	\frac{\nabla_a \nabla_b r}{r}
	&= \left((n-3)\frac{M}{r^{n-1}}
		- \frac{2\Lambda}{(n-1)(n-2)}\right) g_{ab} , \\
	R_{abcd} &= \frac{R}{4} (g\odot g)_{abcd} , \quad
	R_{ab} = \frac{R}{2} g_{ab} , \\
	R &= \frac{4\Lambda}{(n-1)(n-2)} + 2(n-2)(n-3) \frac{M}{r^{n-1}} .
\end{align}
They can be directly plugged into~\eqref{eq:wp-riem}, the formula for
the Riemann tensor of a $2+m$-warped product, to get the explicit
expression for the Riemann tensor $\bar{R}_{\mu\nu\la\ka}$ of a gST
geometry $(\bar{\M},\bar{g})$.

\begin{multline} \label{eq:gST-riem}
	\bar{R}_{\mu\nu\la\ka} =
		\frac{\Lambda}{(n-1)(n-2)} (\bar{g}\odot \bar{g})_{\mu\nu\la\ka} \\
	+ \frac{M}{r^{n-1}} \left[
		\frac{(n-2)(n-3)}{2} (g\odot g)_{\mu\nu\la\ka}
	+ (r^2\Omega \odot r^2\Omega)_{\mu\nu\la\ka}
	- (n-3) (g\odot r^2\Omega)_{\mu\nu\la\ka} \right] .
\end{multline}

Next, let us define several tensors and scalars built out of the Riemann
tensor and its derivatives:
\begin{align}
\label{eq:gST-T}
	\bar{T}_{\mu\nu\la\ka}[\bar{g}]
	&:= \bar{R}_{\mu\nu\la\ka}
		- \frac{\Lambda}{(n-1)(n-2)} (\bar{g}\odot \bar{g})_{\mu\nu\la\ka} , \\
\label{eq:gST-rho}
	\rho[\bar{g}]
	&:= \left[\frac{(\bar{T}\cdot \bar{T} \cdot \bar{T})_{\mu\nu}{}^{\mu\nu}}
			{8(n-1)(n-2)(n-3) [(n-2)(n-3)(n-4)+2]}\right]^{\frac{1}{3}} , \\
\label{eq:gST-ell}
	\ell_\mu[\bar{g}]
	&:= -\frac{1}{(n-1)} \frac{\bnabla_\mu \rho}{\rho} , \\
\label{eq:gST-A}
	A[\bar{g}]
	&:= \ell_\mu \ell^\mu + 2\rho + \frac{2\Lambda}{(n-1)(n-2)} .
\end{align}
For future reference, we also compute some algebraic combinations among
these tensors (see~\cite[Sec.3.3]{kh-compat} for more intermediate steps
of the calculations):
\begin{align}
	\bar{T}
	&= \frac{M}{r^{n-1}} \left[
		\frac{(n-2)(n-3)}{2} (g\odot g)
	+ (r^2\Omega \odot r^2\Omega)
	- (n-3) (g\odot r^2\Omega) \right] , \\
	\bar{T}\cdot \bar{T}
	&= \left(\frac{M}{r^{n-1}}\right)^2 \left[
		(n-2)^2(n-3)^2 (g\odot g)
		+ 4 (r^2\Omega\odot r^2\Omega)
		+ 2 (n-3)^2 (g\odot r^2\Omega) \right] \\
	\bar{T}\cdot \bar{T} \cdot \bar{T}
	&= \left(\frac{M}{r^{n-1}}\right)^3 \left[
		2(n-2)^3(n-3)^3 (g\odot g)
		+ 16 (r^2\Omega\odot r^2\Omega)
		- 4 (n-3)^3 (g\odot r^2\Omega) \right] , \\
	(\bar{T}\cdot \bar{T})_{\mu\nu}{}^{\nu}{}_\ka
	&= -\left(\frac{M}{r^{n-1}}\right)^2 \left[
		2(n-1)(n-2)(n-3)^2 g_{\mu\ka}
		+ 4(n-1)(n-3) r^2\Omega_{\mu\ka} \right] , \\
	(\bar{T}\cdot \bar{T})_{\mu\nu}{}^{\mu\nu}
	&= 4(n-1)(n-2)^2(n-3) \left(\frac{M}{r^{n-1}}\right)^2 , \\
	(\bar{T}\cdot \bar{T} \cdot \bar{T})_{\mu\nu}{}^{\mu\nu}
	&= 8(n-1)(n-2)(n-3) [(n-2)(n-3)(n-4)+2] \left(\frac{M}{r^{n-1}}\right)^3 ,
	\\
	&\qquad
	\rho = \frac{M}{r^{n-1}} , \quad
	\ell_\mu = \frac{r_\mu}{r} , \quad
	A = \frac{\alpha}{r^2} .
\end{align}
Given the last row of identities, it is clear that
\begin{equation} \label{eq:gST-Arho}
	\sgn A = \sgn \alpha, \quad \text{and} \quad
	A^{n-1} \rho^{-2} = \alpha^{n-1} M^{-2} .
\end{equation}

Next, we find a way to express the projector $r^2\Omega_{\mu\nu}$ onto
the warped factor in terms of the curvature. Here we find a slight
dimension dependence (as already noted in~\cite[Sec.3.3]{kh-compat}). In
dimension $n\ge 5$, one can find a formula that involves only products
and contractions $\bar{g}$ of $\bar{T}$:
\begin{align} \label{eq:gST-Omega}
	(r^2 \Omega)_{\la\ka}
	&=
	\frac{2(n-2)^2}{(n-1)(n-4)} 
	\frac{(\bar{T}\cdot \bar{T})_{\la\nu}{}^{\nu}{}_\ka}{(\bar{T}\cdot\bar{T})_{\mu\nu}{}^{\mu\nu}}
	+ \frac{(n-2)(n-3)}{(n-1)(n-4)} \bar{g}_{\la\ka} .
\end{align}
Obviously, the above formula has poles and hence fails when $n=4$. On
the other hand, the following slightly more complex formula works both
in $n=4$ as well as higher dimensions:
\begin{equation} \label{eq:gST-Omega-4d}
	(r^2 \Omega)_{\la\ka}
	=
	-\frac{1}{(n-1)(n-3) \, \rho \, \ell^2} \left(\bar{T}_{\mu\la\nu\ka}
		- \frac{(n-2)(n-3)}{2} \rho \, (\bar{g}\odot \bar{g})_{\mu\la\nu\ka} \right)
		\ell^\la \ell^\ka .
\end{equation}
The complexity of the second formula is due to the presence of the
$\ell_\mu$ vector, which is itself defined as the gradient of a scalar
of a gradient constructed from $\bar{T}$. Thus
formula~\eqref{eq:gST-Omega} may be preferable in $n\ge 5$
to~\eqref{eq:gST-Omega-4d}, even if the latter also works in higher
dimensions.

The following result simply identifies the invariant parameters that can
be used to exhaustively label the distinct isometry classes of gST
reference geometries as we have defined them earlier. Note that below we
adopt the convention that the sign function satisfies $\sgn 0 = 0$.

\begin{prp} \label{prp:gST-isom}
A gST geometry $(\bar{\M}, \bar{g})_{\alpha,M,\Lambda}$ is locally
isometric at $x\in \bar{\M}$ to another gST geometry $(\bar{\M}',
\bar{g}')_{\alpha',M',\Lambda'}$ at $x' \in \bar{\M}'$ iff
\begin{equation}
	((\sgn \alpha) \left|\alpha\right|^{n-1} M^{-2}, \Lambda)
	= ((\sgn \alpha') \left|\alpha'\right|^{n-1} M'^{-2}, \Lambda') ,
		\quad
	\rho[\bar{g}](x) = \rho[\bar{g}'](x')
\end{equation}
and one of the following holds
\begin{itemize}
\item[(a)] $x$ is a generic point,
	$\ell^2[\bar{g}](x) \ne 0 \ne \ell^2[\bar{g}'](x')$, or
\item[(b)] $x$ is a horizon point,
	$\ell_\mu[\bar{g}](x) \ne 0 \ne \ell_\mu[\bar{g}'](x')$
	and $\ell^2[\bar{g}](x) = 0 = \ell^2[\bar{g}'](x')$, or
\item[(c)] $x$ is a horizon bifurcation point,
	$\ell_\mu[\bar{g}](x) = 0 = \ell_\mu[\bar{g}'](x')$.
\end{itemize}
Hence, $(\bar{\M}, \bar{g})_{\alpha,M,\Lambda}$ and $(\bar{\M}',
\bar{g}')_{\alpha',M',\Lambda'}$ are isometric iff
\begin{equation}
	((\sgn \alpha) \left|\alpha\right|^{n-1} M^{-2}, \Lambda)
	= ((\sgn \alpha') \left|\alpha'\right|^{n-1} M'^{-2}, \Lambda'^2) .
\end{equation}
\end{prp}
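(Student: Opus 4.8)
The plan is to split the biconditional into an ``only if'' part, which is soft and follows from invariance of curvature concomitants, and an ``if'' part, which is the real work and which I would organize by first proving the global corollary and then bootstrapping the local statement from it. Throughout I set $P:=(\sgn\alpha)|\alpha|^{n-1}M^{-2}$ and $\ell^2:=\ell_\mu\ell^\mu$, and I lean on the closing identities $\rho = M/r^{n-1}$, $\ell^2 = r_\mu r^\mu/r^2 = f/r^2$, $A=\alpha/r^2$.

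For the ``only if'' direction I would note that $\bar T$, $\rho$, $\ell_\mu$, $A$ are all concomitants of $\bar g$, so a local isometry $\chi$ with $\chi(x)=x'$ and $\chi^*\bar g'=\bar g$ pulls each of them back; in particular $\rho[\bar g](x)=\rho[\bar g'](x')$, and the same for $\ell^2$ and $A$. From the definition \eqref{eq:gST-A} one has $\Lambda=\tfrac{(n-1)(n-2)}{2}(A-\ell^2-2\rho)$, and from $\rho = M/r^{n-1}$, $A=\alpha/r^2$ (equivalently \eqref{eq:gST-Arho}) one gets $(\sgn A)|A|^{n-1}\rho^{-2}=(\sgn\alpha)|\alpha|^{n-1}M^{-2}=P$; both $\Lambda$ and $P$ are thus constant covariant scalars and are preserved by $\chi$. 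Since the trichotomy (a)/(b)/(c) is expressed purely through the isometry-covariant $\ell_\mu$ and $\ell^2$, the point type is preserved as well. This disposes of the ``only if'' part of both the local statement and the global corollary.

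For the ``if'' direction I would first establish the global corollary via a one-parameter isometric rescaling. For $c>0$ the base coordinate change $\hat r = cr$, $\hat t = t/c$ sends $f\mapsto c^2 f$, hence $(\alpha,M,\Lambda)\mapsto(c^2\alpha,\,c^{n-1}M,\,\Lambda)$; combined with the Killing--Hopf isometry between $c^{-2}\Omega_{m,\alpha}$ and $\Omega_{m,c^2\alpha}$ (both maximally symmetric Riemannian of curvature $c^2\alpha$) this identifies $(\bar\M,\bar g)_{\alpha,M,\Lambda}$ with $(\bar\M,\bar g)_{c^2\alpha,\,c^{n-1}M,\,\Lambda}$, and the orbit preserves $\Lambda$ and $P$. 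Given $P=P'$ and $\Lambda=\Lambda'$, the signs of $\alpha,\alpha'$ agree since $\sgn P=\sgn\alpha$, so (in the generic case $\alpha\ne0$) I choose $c$ with $c^2\alpha'=\alpha$, after which $P=P'$ forces $c^{n-1}M'=\pm M$; the residual sign is absorbed by the isometry $r\mapsto -r$, which by the even/odd bookkeeping built into $(\M,g)_{n,\alpha,M,\Lambda}$ identifies the $M$ and $-M$ geometries. The flat case $\alpha=0$ is even easier, as then $P=0$ imposes no constraint on $M$. This yields a global isometry whenever $P=P'$ and $\Lambda=\Lambda'$, proving the corollary.

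To obtain the local statement I would compose with this global isometry $\Psi$ and reduce to two points $y=\Psi(x)$ and $x'$ of a \emph{single} gST geometry, now with $\rho(y)=\rho(x')$ and the same point type; $P$ and $\Lambda$ are then automatic, and $\rho(y)=\rho(x')$ forces $r(y)^{n-1}=r(x')^{n-1}$, hence equal $\ell^2=f/r^2$ and $A=\alpha/r^2$ (using $f(-r)=f(r)$ for odd $n$ to absorb the sign ambiguity, $r^{n-1}$ being injective for even $n$). It remains to move $y$ to $x'$ by a local isometry, which I would assemble from the homogeneity of the model: the transitive isometry group of the warped factor lifts to $\bar\M$ and aligns the $\S$-components, reducing matters to the $2$-dimensional base, where the Killing field $\partial_t$ sweeps out each level set of $r$. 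In a generic region ($\ell^2\ne0$) this is a plain $t$-translation, but at a horizon ($\ell^2=0$, $\ell_\mu\ne0$) or a bifurcation point ($\ell_\mu=0$) the $(t,r)$ chart degenerates; there I would invoke Proposition~\ref{prp:birkhoff}, whose cases (b) and (c) provide analytic charts (via the tortoise coordinate $r_*$) in which points of equal $r$ and equal type are related by the flow of the boost Killing field. The main obstacle is precisely this horizon/bifurcation analysis: the static chart breaks down where $f$ vanishes, and the function of the trichotomy (a)/(b)/(c) is to separate the genuinely distinct local geometries sharing the same $\rho$ — a boost-regular point versus a boost fixed point. Granting the normal forms of Proposition~\ref{prp:birkhoff}, the required local isometry is then immediate, completing the ``if'' direction.
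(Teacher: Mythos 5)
Your proposal is correct and takes essentially the same route as the paper: necessity follows from the isometry-invariance of the curvature concomitants $\rho$, $\ell_\mu$, $A$, and sufficiency from the scaling isometries $(t,r)\mapsto(t/c,\,cr)$ (with the compensating Killing--Hopf rescaling of the warped factor) together with $r\mapsto -r$ to normalize $(\alpha,M)$. The only difference is organizational: you prove the global corollary first and then match $x$ to $x'$ explicitly through the normal-form charts of Proposition~\ref{prp:birkhoff}, while the paper folds the point-matching into a terser appeal to analytically extended diffeomorphisms, so your treatment of horizon and bifurcation points is, if anything, more explicit.
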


In other words, the pair or real numbers
\begin{equation}
	\left((\sgn \alpha) |\alpha|^{n-1} M^{-2}, \Lambda\right) \in \mathbb{R}^2
\end{equation}
uniquely and exhaustively identifies all isometry classes among the gST
geometries (Definition~\ref{def:gST}). Moreover, non-isometric gST
geometries are not even locally isometric.

\begin{proof}
Many of the arguments below are based on the fact that the existence of
a local isometry linking the points $x$ and $x'$ forces the pairwise
equality of all curvature scalars respectively evaluated at these
points. A slight generalization of this idea to covariant identities
involving curvature tensors immediately establishes our claims (a), (b)
and (c). Also, recall that, from our definition of a reference gST
geometry, the coordinate transformation $r\mapsto -r$ always corresponds
to the parameter flip $M \mapsto -M$, independent of the parity of the
dimension $n$. Finally, we will assume that the points $x\in \bar{\M}$
and $x' = \bar{\M'}$ belong to the regions where we can introduce the
$(t,r)$ and $(t',r')$ coordinates, as in~\eqref{eq:schw-f}, on the base
factors. Then, simple coordinate transformations on $(t,r)$ extend to
globally defined diffeomorphisms of $\bar{\M}$ or $\bar{\M}'$ by
analyticity. When such coordinates are ill-defined on neighborhoods of
$x$ or $x'$, the same logic applies, but where we need to directly apply
the diffeomorphisms defined by analytic extension.

First, note that $\Lambda = \Lambda'$ is necessary for local isometry.
Relying on~\eqref{eq:gST-riem}, we can obtain this constant from the
Ricci scalar of the geometry, $\bar{R}[\bar{g}] =
\frac{2n\Lambda}{(n-2)}$ and $\bar{R}[\bar{g}'] =
\frac{2n\Lambda'}{(n-2)}$. Let us assume the equality $\Lambda =
\Lambda'$ from now on.

Next, relying on~\eqref{eq:gST-Arho}, note that
\begin{gather}
	\sgn A[\bar{g}] = \sgn \alpha, \quad
	A[\bar{g}]^{n-1} \rho[\bar{g}]^{-2} = \alpha^{n-1} M^{-2}
	\\ \text{and} \quad
	\sgn A[\bar{g}'] = \sgn \alpha', \quad
	A[\bar{g}']^{n-1} \rho[\bar{g}']^{-2} = \alpha'^{n-1} M'^{-2} .
\end{gather}
Hence, since knowledge of $(\sgn \alpha) |\alpha|^{n-1} M^{-2}$ is
equivalent to the knowledge of both $\sgn \alpha$ and $\alpha^{n-1}
M^{-2}$, the equality $(\sgn \alpha) |\alpha|^{n-1} M^{-2} = (\sgn
\alpha') |\alpha'|^{n-1} M^{-2}$ is also necessary for local isometry.
Let us assume this equality from now on. It remains only to check that
both equalities guarantee the existence of an isometry.

When $\sgn \alpha = 0 = \sgn \alpha'$, apply the coordinate
transformations $(1/t,r) \mapsto |M|^{\frac{1}{n-1}} (1/t,r)$ and
$(1/t',r') \mapsto |M'|^{\frac{1}{n-1}} (1/t',r')$, together with a
possible $r\mapsto -r$ and/or $r'\to -r'$ flip, depending on the signs
of $M$ and $M'$, to bring the base factors to isometric form with
$M=1=M'$. To keep $r$ and $r'$ as the warping functions, these
transformations must be accompanied by the conformal rescalings $\Omega
\mapsto |M|^{\frac{2}{n-1}} \Omega$ and $\Omega' \mapsto
|M|^{\frac{2}{n-1}} \Omega'$. However, since $\alpha = 0 = \alpha'$ and
the two warped factors are flat, these rescalings do not affect their
isometry class. Hence, the two gST geometries must be isometric since
they have the same warped product structure.

Now, assume that $\alpha \ne 0 \ne \alpha'$, while necessarily $\sgn
\alpha = (-1)^k = \sgn \alpha'$. Then the coordinate redefinitions
$(1/t,r) \mapsto |\alpha|^{\frac{1}{2}} (1/t,r)$ and $(1/t',r') \mapsto
|\alpha|^{\frac{1}{2}} (1/t',r')$ bring them both to $\alpha = (-1)^k =
\alpha'$. Let us assume this equality from now on.

The only possible remaining difference between the parameters is that
$\sgn M \ne \sgn M'$, while $|M| = |M'|$. But then, applying $r\mapsto
-r$ or $r' \mapsto -r'$ brings about the equality $M = M'$ and hence the
desired isometry.

Clearly, the above arguments apply both to local isometries as well as
to global isometries. This concludes the proof.
\end{proof}

\subsection{Birkhoff's theorem} \label{sec:birkhoff}

It is well-known that being a $2+m$-warped product solution of
cosmological Einstein's equations is highly restrictive. In particular,
the geometry of the base factor is restricted to locally coincide with
one of the base factors of a gST geometry, whether the warped factor is
spherically symmetric or not. This rigidity result (though usually
stated with the spherical symmetry assumption) is known as Birkhoff's
theorem~\cite{lake, schmidt, schleich-witt, an-wong}. Below, we state
and prove a version that is convenient for our purposes. The main reason
to include a proof is to make sure that we can cover the corner cases
(when $\nabla r$ becomes null or even vanishes) that are often skipped
in the literature.

Recall that a metric $\bar{g}_{\mu\nu}$ is called a
\emph{$\Lambda$-vacuum} when it satisfies the source-free Einstein
equations with cosmological constant $\Lambda$:
\begin{equation} \label{eq:ee}
	\bar{R}_{\mu\nu}[\bar{g}] - \frac{1}{2} \bar{R}[\bar{g}] \bar{g}_{\mu\nu}
		+ \Lambda \bar{g}_{\mu\nu} = 0
	\iff
	\bar{R}_{\mu\nu}[\bar{g}] - \frac{2\Lambda}{(n-2)} \bar{g}_{\mu\nu} = 0 .
\end{equation}

\begin{prp}[\cite{lake, schmidt, schleich-witt, an-wong}] \label{prp:birkhoff}
Consider a pseudo-Riemannian geometry $(\bar{\M}, \bar{g})$ of dimension
$n=2+m$ that locally, say at $\bar{x}\in \bar{\M}$, has the form $(\M,g)
\times_r (\S,\Omega)$ of a $2+m$-warped product, with
\begin{equation}
	\bar{g}_{\mu\nu} = g_{\mu\nu} + r^2 \Omega_{\mu\nu} ,
\end{equation}
where $(\M,g)$ is Lorentzian and $r$ is not locally constant at $x\in
\M$, the projection of $\bar{x}$ to $\M$. When $\bar{g}_{\mu\nu}$ is a
$\Lambda$-vacuum that is not locally of constant curvature at $\bar{x}$,
the metric of the base factor can be locally put into one of the
following forms at $x$:
\begin{itemize}
\item[(a)]
	when $\nabla_a r \ne 0$, $(\nabla r)^2 \ne 0$ at $x$, in local $(t,r)$
	coordinates,
	\begin{equation}
		g_{ab} = -f \, dt_a dt_b + \frac{1}{f} \, dr_a dr_b ,
	\end{equation}
\item[(b)]
	when $\nabla_a r \ne 0$, $(\nabla r)^2 = 0$, $r=r_H$ at $x$, in local
	$(v,r)$ coordinates,
	\begin{equation}
		g_{ab} = -f \, dv_a dv_b + 2 \, dv_{(a} dr_{b)} ,
	\end{equation}
\item[(c)]
	when $\nabla_a r = 0$, $r = r_H$ at $x$, in local $(U,V)$ coordinates,
	\begin{gather}
	\notag
		r = r(UV) = r_H + r_H UV + O^3(U,V) , \quad \text{with} \quad
		z r'(z) = \frac{f(r)}{f'(r_H)} ,
		\quad \text{and} \\
	\label{eq:birkhoff-c}
		g_{ab} = \frac{-4f e^{-h}}{f'(r_H)^2 \, (1-r/r_H)} \, dU_{(a} dV_{b)} ,
		\quad \text{with} \quad
		h(r) = \int_{r_H}^r ds \left(\frac{f'(r_H)}{f(s)} - \frac{1}{s-r_H}\right) ,
	\end{gather}
\end{itemize}
where in each case
\begin{equation} \label{eq:f-birkhoff}
	f(r) = \alpha - \frac{2M}{r^{n-3}} - \frac{2\Lambda}{(n-1)(n-2)} r^2 ,
\end{equation}
for some constants $\alpha$ and $M \ne 0$. In cases (b) and (c), $r=r_H$
is a root of $f(r)=0$; in case (c) the root is always simple.

Thus, $(\M,g)$ is locally isometric at $x$ to either (a) a generic
point, (b) a horizon point, or (c) a horizon bifurcation point of a gST,
as classified in Proposition~\ref{prp:gST-isom}.
\end{prp}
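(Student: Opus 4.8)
The plan is to exploit the warped-product curvature formulas from Section~\ref{sec:warped} to reduce the $\Lambda$-vacuum condition to ordinary differential equations for the base factor. First I would feed the warped-product Ricci formula~\eqref{eq:wp-ricc} into the Einstein equation~\eqref{eq:ee}. Decomposing $\bar{R}_{\mu\nu} - \frac{2\Lambda}{n-2}\bar{g}_{\mu\nu}=0$ according to the $\M$-block and the $\S$-block gives two families of equations. The $\S$-block, using that $(\S,\Omega)$ need not be assumed maximally symmetric, forces $S_{\mu\nu} = \kappa\,\Omega_{\mu\nu}$ for a constant $\kappa$ playing the role of $(m-1)\alpha$; the $\M$-block yields an equation relating $\nabla_a\nabla_b r$ to $g_{ab}$. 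Because $(\M,g)$ is two-dimensional, $\nabla_a\nabla_b r$ has essentially one independent component once we know $\Box r$, so the off-diagonal vacuum equation says $\nabla_a\nabla_b r \propto g_{ab}$, i.e.\ the traceless part of the Hessian of $r$ vanishes.

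Next I would extract the warping function. From $\nabla_a\nabla_b r = \frac{1}{2}(\Box r)\,g_{ab}$ together with the scalar vacuum equation, one obtains a closed first-order relation for $(\nabla r)^2$ as a function of $r$ alone: differentiating $(\nabla r)^2 = r_a r^a$ along $\nabla r$ and using the Hessian relation produces an ODE whose solution is exactly $(\nabla r)^2 = f(r)$ with $f$ of the form~\eqref{eq:f-birkhoff}, the two integration constants being $\alpha$ (fixed by $\kappa$) and $M$. The condition $M\ne 0$ is equivalent to the geometry not being locally of constant curvature, which is where the excluded de~Sitter/anti-de~Sitter case from the $M=0$ discussion after Definition~\ref{def:gST} is ruled out; I would verify this by comparing with~\eqref{eq:gST-riem}.

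Having pinned down $f$, I would then reconstruct the metric in each of the three coordinate charts. In case~(a), where $r_a\ne 0$ and $(\nabla r)^2=f\ne 0$, the function $r$ together with a conjugate coordinate $t$ (built from the Killing direction $t^a=-\eps^{ab}r_b$, whose Killing property follows from~\eqref{eq:schw-cov-frame}-type identities once the Hessian relation holds) gives the diagonal form. Case~(b) is the horizon, $f(r_H)=0$ with $r_a\ne0$: here $t$ degenerates but an Eddington-type null coordinate $v$ obtained by adjoining the tortoise coordinate $r_*$ from~\eqref{eq:r-tortoise} stays regular, producing the advanced form. Case~(c) is the bifurcation point, where $r_a=0$ at $x$: the key is that $r_a=0$ forces $f(r_H)=0$ and, since $\nabla_a\nabla_b r = \frac{1}{2}f'(r_H)g_{ab}$ must be nondegenerate (else $r$ would be locally constant, contrary to hypothesis), the root is simple, $f'(r_H)\ne0$; then Kruskal-type null coordinates $(U,V)$ with $r=r(UV)$ solving $z\,r'(z)=f(r)/f'(r_H)$ give the regular form~\eqref{eq:birkhoff-c}, and the integral defining $h$ converges precisely because the simple-root singularity has been subtracted.

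I expect the main obstacle to be case~(c), the bifurcation point. Away from it, the Killing field $t^a$ is well-defined and the classical Birkhoff argument runs smoothly, but at $r_a=0$ the adapted coordinates $(t,r)$ and $(v,r)$ both break down, so one cannot simply invoke the generic ODE solution and must instead check directly that the Kruskal substitution extends $r$ and $g$ analytically across the bifurcation surface. Establishing that $h(r)$ is smooth at $r=r_H$ (equivalently, that the apparent pole in $g_{ab}$ from $f(r)/(1-r/r_H)$ is cancelled by $e^{-h}$ and the factor $(\nabla r)^2=f$) requires the simplicity of the root, so the careful bookkeeping that $M\ne0$ together with the non-constant-curvature hypothesis indeed yields $f'(r_H)\ne0$ is the crux. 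This is exactly the corner case flagged in the remarks preceding the statement as the reason for including a proof at all.
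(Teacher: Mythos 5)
Your overall route---project the $\Lambda$-vacuum equation onto the base factor via~\eqref{eq:wp-ricc}, deduce from the traceless part that $\nabla_a\nabla_b r\propto g_{ab}$, integrate to get $f$ of the form~\eqref{eq:f-birkhoff} with two integration constants, and then treat the three charts separately---is essentially the paper's. The genuine gap sits in case~(c), at precisely the point you identify as the crux. You justify $f'(r_H)\ne 0$ by asserting that the Hessian $\nabla_a\nabla_b r=\tfrac12 f'(r_H)\,g_{ab}$ ``must be nondegenerate (else $r$ would be locally constant, contrary to hypothesis)''. That implication is not automatic: a smooth non-constant function can perfectly well have a degenerate critical point, so some feature of the field equations must forbid it. The paper's argument is that $t_a=-\eps_{ab}\nabla^b r$ is a Killing field of $(\M,g)$ (this follows from the Hessian relation, as you note), that $\nabla_a t_b=\tfrac12(\square r)\,\eps_{ab}$, and that the Killing equation is of \emph{finite type}: if $t_a(x)=0$ and $\nabla_a t_b(x)=0$, then $t_a\equiv 0$ on a neighborhood, hence $\nabla_a r\equiv 0$, contradicting the hypothesis that $r$ is not locally constant. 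Without this ingredient (or the paper's independent exclusion of multiple roots inside the $(U,V)$ analysis, where a root of order $k>1$ forces $UV\sim e^{D/(r-r_H)^{k-1}}$, incompatible with $UV$ being a smooth invertible function of $r$ at $r_H$), your chain does not close.

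A second, smaller weakness is your strategy for case~(c): ``check directly that the Kruskal substitution extends $r$ and $g$ analytically across the bifurcation surface'' shows only that the generic solution \emph{admits} such an extension, whereas the proposition requires that \emph{every} $\Lambda$-vacuum warped product with a critical point of $r$ be locally forced into the form~\eqref{eq:birkhoff-c}. The paper therefore starts from the general double-null ansatz $g_{ab}=2F(U,V)\,dU_{(a}dV_{b)}$, shows the traceless equations force $r$ and $F$ to be invariant under the flow of $U\del_U-V\del_V$, and then must (i) handle the fact that a smooth invariant of that flow is a function of $UV$ only on each of the regions $U>0$, $U<0$, $V>0$, $V<0$ separately, and (ii) match the integration constants $\alpha$ and $M$ across the overlaps of those regions, since a priori they could differ in the four quadrants surrounding the bifurcation point. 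Your plan tacitly assumes a single pair $(\alpha,M)$ on a whole punctured neighborhood of $x$, which is part of what has to be proved.
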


\begin{proof}
We address the last statement first. The transitions between the
different charts in (a), (b) and (c) are effected with the help of the
\emph{tortoise coordinate}
\begin{equation} \label{eq:r-tortoise}
	r_* = \frac{1}{f'(r_H)} \log\left(\frac{r}{r_H}-1\right)
		+ \frac{h(r)}{f'(r_H)} ,
	\quad \text{which satisfies} \quad
	dr_* = \frac{dr}{f(r)} ,
\end{equation}
implicitly defining $h(r)$ as in~\eqref{eq:birkhoff-c}. The null
coordinate from (b) has the form $v = t+r_*$, while the double null
coordinates from (c) have the form $U = -e^{-f'(r_H)(t-r_*)/2}$, $V =
e^{f'(r_H)(t+r_*)/2}$. Direct calculation shows that the metrics
$g_{ab}$ expressed in these charts agree on overlaps. Hence, charts (b)
and (c) constitute analytic extensions of the charts from (a), which
when glued in a simply connected way, form the maximal analytic
extension, of the gST geometry from Definition~\ref{def:gST}. The
correspondence between points (a), (b) and (c) from the current
Proposition with those from Proposition~\ref{prp:gST-isom} is obvious.

Before entering further specific arguments, we use
formulas~\eqref{eq:wp-ricc} and~\eqref{eq:ddr} to project the Einstein
equations~\eqref{eq:ee} onto the base factor of the warped product:
\begin{equation}
	R_{ab} - (n-2)\frac{\nabla_a\nabla_b r}{r} -
	\frac{2\Lambda}{(n-2)} g_{ab} = 0 .
\end{equation}
Recalling that in $2$ dimensions $R_{ab} = \frac{1}{2} R g_{ab}$, the
equation decomposes into its trace and traceless parts:
\begin{equation}
	R - (n-2)\frac{\square r}{r} - \frac{4\Lambda}{(n-2)} = 0 , \quad
	(n-2)\frac{\nabla_a \nabla_b r}{r} - \frac{\square r}{2r} g_{ab} = 0 .
\end{equation}
Contracting the traceless part with $\eps_{ab}$ shows that $t_a =
-\eps_{ab} r^b$ is Killing, $\nabla_{(a} t_{b)} = 0$.

Suppose that $x$ is critical a point of $r$, that is, $\nabla_a r (x)= 0$,
and hence also $t_a(x) = 0$. We will now argue that this critical point
must be non-degenerate and hence isolated (cf.\ Remark~2.9
in~\cite{an-wong}). From the projected Einstein equations above,
$\nabla_a \nabla_b r \propto g_{ab}$, hence either $\nabla_a \nabla_b r(x)
= 0$ or the critical point is non-degenerate. If indeed $\nabla_a
\nabla_b r(x) = 0$, then the formula $t_a = - \eps_{ab} \nabla^b r$
tells us that $t_a(x) = 0$ and $\nabla_a t_b(x) = 0$ as well. In turn,
this implies that locally $t_a \equiv 0$, and hence also $\nabla_a r
\equiv 0$, which contradicts our hypothesis that $r$ is not locally
constant. The reason is that $t_a$ solves the Killing equation, which is
an equation of \emph{finite type}. In short, knowing $t_a(x)$ and
$\nabla_a t_b(x)$ determines $t_a$ uniquely in a neighborhood of
$x$~\cite[App.B]{geroch-killing}, which in this case gives $t_a \equiv
0$.

Next, we address each of the possibilities stated in the theorem. The
function $f(r)$ from~\eqref{eq:f-birkhoff} always appears as the general
solution, parametrized by constants $\alpha$ and $M$, of the
differential equation
\begin{equation}
	r\left(rf' + (n - 3)f\right)' = - \frac{4\Lambda}{(n-2)} r^2 .
\end{equation}

(a)
When $\nabla_a r \ne 0$ is non-null, we are free to pick orthogonal
coordinates $(t,r)$, with $(\del_t)^a \propto t^a$ Killing. Then, the most
general ansatz for the metric is $g_{ab} = -f(r) dt_a dt_b + 1/h(r) dr_a
dr_b$. The projected Einstein equations reduce to
\begin{equation}
	\frac{h'}{h} = \frac{f'}{f} , \quad
	r\left(rf' + (n - 3)f\right)' = - \frac{4\Lambda}{(n-2)} r^2 .
\end{equation}
Up to rescaling $t$ by a constant, $h(r) = f(r)$ and $f(r)$ is as
stipulated. The metric $g_{ab}$ is singular only when $\alpha = M =
\Lambda = 0$, meaning that all other values of the parameters are
allowed.

(b)
When $\nabla r(x) \ne 0$ is null, we are free to pick coordinates
$(v,r)$, with $\nabla_a v$ null everywhere, with $(\del_v)^a \propto t^a$
Killing. Then, the most general ansatz for the metric is $g_{ab} = -f(r)
dv_a dv_b + 2 h(r) dv_{(a} dr_{b)}$. The projected Einstein equations
reduce to
\begin{equation}
	h' = 0, \quad
	r(rf' + (n-3)f)' = -\frac{4\Lambda}{(n-2)} r^2 .
\end{equation}
Up to rescaling $v$ by a constant, $h(r) = 1$ and $f(r)$ is as
stipulated. For $\nabla_a r$ to be null at $x$ as well, we must have
$f(r_H) = 0$.

(c)
When $x$ is a non-degenerate critical point of $r$, we are free to pick
double null coordinates $(U,V)$ centered at $x$. Let $r_H = r(x)$, which
must be a non-zero constant. In $2$ dimensions, double null coordinates
are unique up to permutation and individual reparametrization of each
coordinate. Then, the most general ansatz for the metric is
\begin{equation}
	g_{ab} = 2 F(U,V) \, dU_{(a} dV_{b)} .
\end{equation}
Our hypotheses on $r$ force its Taylor expansion to start, up to a
constant rescaling, as
\begin{equation}
	r(U,V)
	= r_H + \frac{1}{2L} g_{ab} (U\del_U)^a (V\del_V)^b + O^3(U,V)
	= r_H + \frac{F(0,0)}{L} UV + O^3(U,V) ,
\end{equation}
for some constant $L\ne 0$, which will be constrained later on. The
precise form of $r=r(U,V)$ is to be determined from the equations. The
traceless part of the projected Einstein equations reduces to
\begin{equation} \label{eq:birkhoff-trless}
	\del_U\frac{\del_U r}{F} = 0 , \quad
	\del_V\frac{\del_V r}{F} = 0
	\iff
	\frac{\del_U r}{\xi(U)} = V \frac{F(U,V)}{L\xi(U) \eta(V)}, \quad
	\frac{\del_V r}{\eta(V)} = U \frac{F(U,V)}{L\xi(U) \eta(V)} ,
\end{equation}
for some arbitrary $\xi(U)$ and $\eta(V)$, though with $\xi(0) = 1 =
\eta(0)$ as needed to maintain our hypotheses on the Taylor expansion of
$r$. We are free to change our ansatz by $F(U,V) \mapsto F(U,V)
\xi(U)\eta(V)$ and reparametrize the coordinates subject to $\xi(U) dU
\mapsto dU$, $\eta(V) dV \mapsto dV$, effectively setting $\xi = 1 =
\eta$. Then, two immediate integrability conditions follow:
\begin{gather}
	U (\del_U r - V F) - V (\del_V r - U F)
	= U\del_U r - V\del_V r = 0 ,
	\\
	U\del_U(\del_V r - U F) - V\del_V(\del_U r - V F)
	= UV (U\del_U F - V\del_V F) = 0 .
\end{gather}
Therefore, both $F$ and $r$ are constant along the flow lines of the
vector field $U\del_U - V\del_V$, which are the connected components of
the level sets of $UV$. Without loss of generality, we can restrict to a
neighborhood where each level set of $UV$ consists of exactly two
connected components, exchanged by the flip $(U,V) \mapsto (-U,-V)$.

At this point, we would like to conclude that $r=r(UV)$ and $F=F(UV)$
for some smooth functions $r(z)$ and $F(z)$, but this conclusion must be
postponed due to the technical complication (not shared by
\emph{polynomial} or \emph{analytic functions}) that a \emph{smooth
function} invariant under the flow of $U\del_U - V\del_V$ takes such a
form only on those open regions where the product $UV$ may play the role
of a simple coordinate (no critical points, connected level sets). For
instance, $F = F_{\{U>0\}}(UV)$ on $U>0$ and $F = F_{\{V>0\}}(UV)$ on
$V>0$, but $F_{\{U>0\}}(z)$ and $F_{\{V>0\}}(z)$ may be different smooth
functions. Of course, these functions have to agree on overlapping
regions, namely for $z>0$, in this case. Below, we will presume that we
are restricting to one of the open regions $U>0$, $U<0$, $V>0$, or
$V<0$.

It turns out that it is more convenient to write everything in terms of
$r$, $UV = UV(r)$ and $F=F(r)$, which is always possible to do locally,
away from $(U,V)=(0,0)$ and subject to the above caveats. Taking
advantage of the usual identity $(UV)' = 1/r'$, our previous integrating
step~\eqref{eq:birkhoff-trless} simply gives $F = L/(UV)'$. Thus, the
remaining trace part of the projected Einstein equations reduces to
\begin{equation}
	r\del_r (r\del_r + (n-3)) \frac{2}{L}\frac{UV}{(UV)'}
		= -\frac{4\Lambda}{(n-2)} r^2
	\iff
	\frac{UV}{(UV)'} = \frac{L}{2} f(r) ,
\end{equation}
with $f(r)$ as stipulated. At $(U,V)=0$, the above left-hand side
evaluates to $0$. Hence, $r=r_H$ must solve $f(r) = 0$. If it is a
multiple root, that is $f(r) = C(r-r_H)^k + O(r-r_H)^{k+1}$ for some
constants $C$ and $k>1$, then asymptotically $UV \sim
e^{D/(r-r_H)^{k-1}}$ for some constant $D$ as $r\to r_H$, which is not
compatible with our requirement that $UV$ be a smooth function of $r$ at
$r=r_H$ and vice versa. Hence, $r=r_H$ must also be a \emph{simple root}
of $f(r)=0$, meaning that $f'(r_H) \ne 0$.

Thus, given that $f(r) = 0$ has a simple root at $r_H$, we can rewrite
the equation for $UV$ as
\begin{equation}
	\frac{(UV)'}{UV} = \frac{2}{L f'(r_H)(r-r_H)} + h'(r)
	\iff
	UV = C(r/r_H-1)^{\frac{2}{Lf'(r_H)}} e^{h(r)} ,
\end{equation}
for some constant $C$, with smooth
\begin{equation}
	h(r) = \frac{2}{L f'(r_H)} \int_{r_H}^r ds \left(\frac{f'(r_H)}{f(s)} - \frac{1}{s-r_H}\right) .
\end{equation}
For this formula to be consistent with the Taylor
expansion $r(U,V) = r_H + \frac{F(0,0)}{L} UV + O^3(U,V)$, we must have
$L = 2/f'(r_H)$ and $C = r_H L/F(0,0)$. The final form of the solution
is then
\begin{equation} \label{eq:birkhoff-UV-F}
	UV = \frac{2 r_H}{f'(r_H) F(0,0)}
		\left(\frac{r}{r_H} - 1\right) e^{h} , \quad
	F = \frac{L}{(UV)'}
		= \frac{f'(r_H) F(0,0)}{2r_H} \frac{-2f e^{-h}}{f'(r_H)^2 (1-r/r_H)} .
\end{equation}
To bring the metric into the desired form, it remains only to choose the
value of $F(0,0) = \frac{2r_H}{f'(r_H)}$, which could be done by
appropriately rescaling $U$ or $V$ by a constant. This also finally
fixes the initial coefficients in Taylor expansion $r = r_H +
r_H UV + O^3(U,V)$.

Finally, recall that the above discussion, determining the precise form
of $UV=UV(r)$ and $F=F(r)$, applies separately in each of the open
regions $U>0$, $U<0$, $V>0$ or $V<0$, though that precise form of the
functions $UV(r)$ and $F(r)$ may differ from region to region. It is
obvious that the only differences may be in the values of the constants
$\alpha$ and $M$, which a priori may be different in these different
regions. However, $UV(r)$ and $F(r)$ must agree on the intersection
whenever two of these regions overlap (e.g.,\ $U>0$ and $V>0$), and this
is only possible if the values of $\alpha$ and $M$ agree between the
overlapping regions. Considering all possible overlaps, the values of
$\alpha$ and $M$ must then agree in all these regions. In other words,
the formulas in~\eqref{eq:birkhoff-UV-F} actually hold on a whole
neighborhood of $(U,V)=(0,0)$ without any more reservations (the
extension to the origin is by continuity). This concludes the proof.
\end{proof}

\section{Characterization} \label{sec:ideal}

In this Section, we state and prove our main result on the IDEAL
characterization of local isometry classes
(Definition~\ref{def:loc-isom}) of generalized Schwarzschild-Tangherlini
(gST) geometries (Definition~\ref{def:gST}). The result comes in two
versions, one valid for any dimension $n\ge 5$
(Theorem~\ref{thm:ideal-gST}), and the other valid for $n=4$ as well as
higher dimensions (Theorem~\ref{thm:ideal-gST-4d}). The only difference
between them is the covariant formula for extracting the idempotent
projector $\bar{\Omega}_{\mu\nu}$ from the curvature. In higher
dimensions, $n\ge 5$, it can be obtained by a simpler formula than in
$n=4$. However, the more complicated formula that works in $n=4$ also
generalizes to higher dimensions. When restricted to the standard
spherically symmetric, $\Lambda = 0$, $n=4$ Schwarzschild solution, our
Theorem~\ref{thm:ideal-gST-4d} provides an independent alternative IDEAL
characterization compared to the one previously obtained
in~\cite{fs-schw}.%
	\footnote{The notation in~\cite{fs-schw} is somewhat hard to follow.
	A transcription of the key formulas into more standard tensor
	notation can be found in~\cite{gpgl-schw}.} %
For other values of the dimension $n$, and the parameters $\alpha$, $M$
and $\Lambda$, the results of this Section are new.

\begin{thm} \label{thm:ideal-gST}
Consider a Lorentzian geometry $(\bar{\mathcal{M}}, \bar{g})$,
with $\dim \bar{\mathcal{M}} = n \ge 5$. Given a constant $\Lambda$,
define the following tensors and scalars from the metric and the
curvature:
\begin{subequations}
\begin{align}
	\bar{T}_{\mu\nu\la\ka}
	&:= \bar{R}_{\mu\nu\la\ka}
		- \frac{\Lambda}{(n-1)(n-2)} (\bar{g}\odot \bar{g})_{\mu\nu\la\ka} , \\
	\rho
	&:= \left[\frac{(\bar{T}\cdot \bar{T} \cdot \bar{T})_{\mu\nu}{}^{\mu\nu}}
			{8(n-1)(n-2)(n-3) [(n-2)(n-3)(n-4)+2]}\right]^{\frac{1}{3}} , \\
	\ell_\mu
	&:= -\frac{1}{(n-1)} \frac{\bnabla_\mu \rho}{\rho} , \\
	A
	&:= \ell_\mu \ell^\mu + 2\rho + \frac{2\Lambda}{(n-1)(n-2)} , \\
\label{eq:Omega-gen}
	\bar{\Omega}_{\mu\nu}
	&:= \frac{2(n-2)^2}{(n-1)(n-4)} 
		\frac{(\bar{T}\cdot \bar{T})_{\mu\la}{}^{\la}{}_\nu}
			{(\bar{T}\cdot\bar{T})_{\la\ka}{}^{\la\ka}}
		+ \frac{(n-2)(n-3)}{(n-1)(n-4)} \bar{g}_{\mu\nu} , \\
	g_{\mu\nu}
	&:= \bar{g}_{\mu\nu} - \bar{\Omega}_{\mu\nu} , \\
	Z_{\mu\nu\la\ka}
\notag
	&:= \bar{T}_{\mu\nu\la\ka} \\
	& \qquad {}
		- \rho \left[
			\frac{(n-2)(n-3)}{2} (g\odot g)_{\mu\nu\la\ka}
			+ (\bar{\Omega} \odot \bar{\Omega})_{\mu\nu\la\ka}
			- (n-3) (g\odot \bar{\Omega})_{\mu\nu\la\ka} \right] .
\end{align}
\end{subequations}
Then the geometry $(\bar{\mathcal{M}}, \bar{g})$ is locally isometric at
$\bar{x}\in \bar{\M}$ to a gST geometry
$(\bar{\M}',\bar{g}')_{\alpha',M',\Lambda'}$ (Definition~\ref{def:gST})
iff $\Lambda=\Lambda'$ and the following conditions hold on some
neighborhood of $\bar{x}$:
\begin{subequations} \label{eq:ideal-gST}
\begin{gather}
\label{eq:ideal-gST-ineq}
	\rho \ne 0 , 
		\quad
	\ell_\mu \ne 0 ,
		\quad
	(\bar{T}\cdot\bar{T})_{\la\ka}{}^{\la\ka} \ne 0 ,
		\quad
	\bar{\Omega}_{\mu\nu} \ge 0 ,
		\\
\label{eq:ideal-gST-wp}
\begin{gathered}
	\bnabla_{[\mu} \ell_{\nu]} = 0 ,
		\quad
	\ell^\mu \bar{\Omega}_{\mu\nu} = 0 ,
		\\
	\bar{\Omega}_\mu{}^\mu = (n-2) ,
		\quad
	\bar{\Omega}_\mu{}^\nu \bar{\Omega}_{\nu\la} = \bar{\Omega}_{\mu\la} ,
		\quad
	\bnabla_\mu \bar{\Omega}_{\nu\la} = - 2\bar{\Omega}_{\mu(\nu} \ell_{\la)} ,
\end{gathered}
		\\
\label{eq:ideal-gST-mass}
	Z_{\mu\nu\la\ka} = 0 ,
		\quad
	(\sgn A) \left|A\right|^{n-1} \rho^{-2}
		= (\sgn \alpha') \left|\alpha'\right|^{n-1} M'^{-2} .
\end{gather}
\end{subequations}
\end{thm}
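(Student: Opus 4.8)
The plan is to prove the stated equivalence by treating its two implications separately: ``$(\bar{\M},\bar g)$ is locally isometric at $\bar x$ to a gST geometry'' on the one hand, and ``$\Lambda=\Lambda'$ together with the conditions~\eqref{eq:ideal-gST}'' on the other.

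For the necessity direction I would first observe that every object $\bar T$, $\rho$, $\ell_\mu$, $A$, $\bar\Omega_{\mu\nu}$, $g_{\mu\nu}$, $Z$ appearing in the statement is a concomitant of $\bar g$ and its curvature, hence invariant under local isometry. Thus, if $\bar g$ is locally isometric to a gST metric, each of these equals the value computed in Section~\ref{sec:gST} for the reference geometry, namely $\rho=M/r^{n-1}$, $\ell_\mu=r_\mu/r$, $A=\alpha/r^2$, $\bar\Omega_{\mu\nu}=r^2\Omega_{\mu\nu}$, $g_{\mu\nu}$ the base-factor metric, and $Z=0$ directly from~\eqref{eq:gST-riem}. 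The inequalities in~\eqref{eq:ideal-gST-ineq} follow because $M\neq0$ (so $\rho\neq0$ and $(\bar T\cdot\bar T)_{\la\ka}{}^{\la\ka}\neq0$), because the warped factor is Riemannian ($\bar\Omega\geq0$), and because $\ell_\mu=r_\mu/r\neq0$ precisely at generic and horizon points; the warped-product relations~\eqref{eq:ideal-gST-wp} are the content of Proposition~\ref{prp:wp-char} applied to the gST warped structure, and the mass relation in~\eqref{eq:ideal-gST-mass} is~\eqref{eq:gST-Arho} combined with the criterion of Proposition~\ref{prp:gST-isom}. The one caveat I would flag is that $\ell_\mu\neq0$ on a whole neighborhood forces $\bar x$ to correspond to a generic or horizon point rather than to a horizon bifurcation point (where $r_\mu$ vanishes); such points are isolated and fall under case (c) of Proposition~\ref{prp:gST-isom}, so they are excluded from this characterization by design.

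For the sufficiency direction I would reconstruct the gST structure from the conditions. First,~\eqref{eq:ideal-gST-wp} are exactly the hypotheses of Proposition~\ref{prp:wp-char} (with $m=n-2$), so $\bar g$ is locally a $2+m$-warped product $\bar g=g+r^2\Omega$ with $\bar\Omega=r^2\Omega$ and $\bnabla_\mu\log|r|=\ell_\mu$; positivity $\bar\Omega\geq0$ guarantees the warped factor is Riemannian and hence the $2$-dimensional base factor $(\M,g)$ is Lorentzian. Next, tracing the relation $Z=0$ shows that $\bar T$ has vanishing Ricci contraction, whence $\bar R_{\mu\nu}=\tfrac{2\Lambda}{n-2}\bar g_{\mu\nu}$ and $\bar g$ is a $\Lambda$-vacuum as in~\eqref{eq:ee}; together with $\rho\neq0$, which excludes $\bar T\propto(\bar g\odot\bar g)$, it is not of constant curvature. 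These are precisely the hypotheses of Birkhoff's theorem (Proposition~\ref{prp:birkhoff}), which I would invoke to put $(\M,g)$ into normal form (a) or (b) with $f(r)=\alpha-2M/r^{n-3}-\tfrac{2\Lambda}{(n-1)(n-2)}r^2$ for some $\alpha$ and $M\neq0$ (case (c) being excluded by $\ell_\mu\neq0$). Finally, contracting $Z=0$ on all four indices with $\bar\Omega$ and using the warped-product curvature~\eqref{eq:wp-riem} isolates the warped-factor curvature and forces $R_{ABCD}[\Omega]=\tfrac{\alpha}{2}(\Omega\odot\Omega)_{ABCD}$, i.e.\ constant curvature $\alpha$ as in~\eqref{eq:cc-riem}; a local constant-curvature space is locally isometric to the maximally symmetric model $(\S,\Omega)_{m,\alpha}$.

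At this point $(\bar{\M},\bar g)$ is locally isometric at $\bar x$ to the gST geometry $(\bar\M,\bar g)_{\alpha,M,\Lambda}$. To match it to the prescribed target I would use that $A=\alpha/r^2$ and $\rho=M/r^{n-1}$ give $(\sgn A)|A|^{n-1}\rho^{-2}=(\sgn\alpha)|\alpha|^{n-1}M^{-2}$ by~\eqref{eq:gST-Arho}, so the mass relation in~\eqref{eq:ideal-gST-mass} reads $(\sgn\alpha)|\alpha|^{n-1}M^{-2}=(\sgn\alpha')|\alpha'|^{n-1}M'^{-2}$; together with $\Lambda=\Lambda'$ and a choice of reference point $x'$ with $\rho'(x')=\rho(\bar x)$ and of the same generic or horizon type, Proposition~\ref{prp:gST-isom} yields a local isometry between $(\bar\M,\bar g)_{\alpha,M,\Lambda}$ and $(\bar\M',\bar g')_{\alpha',M',\Lambda'}$, and transitivity of local isometry (Definition~\ref{def:loc-isom}) completes the argument. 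The step I expect to be the main obstacle is the sufficiency extraction of the warped-factor geometry: Birkhoff's theorem constrains only the base factor, so one must separately verify that $Z=0$, through the purely warped-factor projection of~\eqref{eq:wp-riem}, pins the warped factor down to constant curvature with exactly the value $\alpha=r^2A$ dictated by the base, while keeping careful track of which inequality in~\eqref{eq:ideal-gST-ineq} licenses each structural conclusion (Lorentzian base from $\bar\Omega\geq0$, $\Lambda$-vacuum and non-constant curvature from $Z=0$ and $\rho\neq0$, and generic/horizon type from $\ell_\mu\neq0$).
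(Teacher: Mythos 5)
Your proposal is correct and follows essentially the same route as the paper's own proof: necessity by the direct curvature computations of Section~\ref{sec:gST}, and sufficiency by chaining Proposition~\ref{prp:wp-char} (warped-product structure from~\eqref{eq:ideal-gST-wp}), the trace of $Z=0$ (to get the $\Lambda$-vacuum equation), Birkhoff's theorem (Proposition~\ref{prp:birkhoff}) for the base factor, the full $\bar{\Omega}$-projection of $Z=0$ for constant curvature of the warped factor, and finally Proposition~\ref{prp:gST-isom} via the relation $(\sgn A)|A|^{n-1}\rho^{-2}=(\sgn\alpha)|\alpha|^{n-1}M^{-2}$.

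The one place you diverge from the paper is the treatment of horizon bifurcation points. You read $\ell_\mu\ne 0$ as pointwise non-vanishing at $\bar x$, conclude that bifurcation points are ``excluded by design,'' and accordingly discard case (c) of Birkhoff's theorem in the sufficiency direction. The paper's proof does the opposite: it interprets the condition as $\ell_\mu\not\equiv 0$ on the neighborhood (which is what is actually used to guarantee that $r$ is not locally constant) and explicitly notes that Birkhoff's theorem applies ``without other restrictions on $\bnabla_\mu r(\bar x)$,'' keeping all three cases (a), (b), (c) in play --- indeed, covering the bifurcation point is one of the paper's stated motivations for reproving Birkhoff's theorem. Under your stricter pointwise reading the ``only if'' direction would literally fail at a bifurcation point of a gST geometry (the geometry is locally isometric to a gST there, yet $\ell_\mu(\bar x)=0$), which you noticed but resolved by restricting the scope of the theorem rather than by relaxing the reading of the inequality. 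Apart from this interpretive point, which traces back to an ambiguity in the statement itself rather than to an error in your argument, the proof is sound and complete.
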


By the inequalities $\rho \ne 0$, $\ell_\mu \ne 0$ and
$(\bar{T}\cdot\bar{T})_{\la\ka}{}^{\la\ka} \ne 0$, we mean that these
objects do not vanish at $\bar{x}\in \M$ and hence, by continuity, in
some neighborhood of $\bar{x}$.
By the inequality $\bar{\Omega}_{\mu\nu} \ge 0$ we mean that the
quadratic form $\bar{\Omega}(v,v) = \bar{\Omega}_{\mu\nu} v^\mu v^\nu$
is positive-semidefinite.

Note that, in choosing the precise form of the
conditions~\eqref{eq:ideal-gST}, we have aimed for clarity rather than
any particular kind of minimality. So, for instance, the condition
$\bar{\Omega}_{\mu}{}^{\mu} = (n-2)$ is automatically satisfied by
virtue of our definition of $\bar{\Omega}_{\mu\nu}$, the same being true
for the condition $\bnabla_{[\mu} \ell_{\nu]} = 0$.

\begin{proof}
In the easy direction, the direct calculations from
Section~\ref{sec:gST} show that all of the
identities~\eqref{eq:ideal-gST} hold for any gST geometry.

In the other direction, we first note that the
conditions~\eqref{eq:ideal-gST-wp} involving $\ell_\mu$ and
$\bar{\Omega}_{\mu\nu}$ are precisely needed by
Proposition~\ref{prp:wp-char} to locally put the geometry into
$2+m$-warped product form $(\bar{\M},\bar{g}) = (\M,g) \times_r
(\S,\Omega)$, with $m=n-2$ and $\bar{g}_{\mu\nu} = g_{\mu\nu} + r^2
\Omega_{\mu\nu}$, where $\Omega_{\mu\nu} = r^{-2} \bar{\Omega}_{\mu\nu}$
and $\ell_\mu = \bnabla_\mu \log |r|$, $r$ not locally constant by
$\ell_\mu \not\equiv 0$. Since, $\bar{\Omega}_{\nu\nu}\ge 0$ and
$\bar{g}_{\mu\nu}$ is Lorentzian, $g_{\mu\nu}$ must be a Lorentzian
metric when restricted to the base factor of the warped product.

Next, taking the trace of the $Z_{\mu\nu\la\ka} = 0$ identity, we obtain
precisely the $\Lambda$-vacuum Einstein equations, forcing the equality
$\Lambda = \Lambda'$. Appealing to Birkhoff's theorem
(Proposition~\ref{prp:birkhoff}), we can conclude that the metric
$g_{ab}$ on the \emph{base} of the warped product has the gST form
$(\M,g)_{n,\alpha,M,\Lambda}$ (Section~\ref{sec:gST}), for some values
of the parameters $\alpha$ and $M$ (we have not yet drawn any conclusion
about the \emph{warped factor}). Note that our version of Birkhoff's
theorem is applicable as long as the warping function $r$ is not locally
constant at $\bar{x}\in \bar{\M}$, without other restrictions on
$\bnabla_\mu r(\bar{x})$, with the different possibilities listed as
parts (a), (b) and (c) in Proposition~\ref{prp:birkhoff}.

Two immediate consequences, again following the direct calculations
in Section~\ref{sec:gST}, are the identities
\begin{equation*}
	\rho = \frac{M}{r^{n-1}}
	\quad \text{and} \quad
	A = \frac{\alpha}{r^2} .
\end{equation*}

Now, knowing that our geometry is locally of $2+m$-warped product form,
implies that its Riemann tensor takes the form~\eqref{eq:wp-riem}, where
we can replace $r_\sigma r^\sigma / r^2 = \ell^2$. Hence, projecting the
identity $Z_{\mu\nu\la\ka} = 0$ by $\bar{\Omega}_{\mu\nu}$ on each
index, we obtain
\begin{equation}
	\bar{\Omega}_{\mu'}^\mu
	\bar{\Omega}_{\nu'}^\nu
	\bar{\Omega}_{\la'}^\la
	\bar{\Omega}_{\ka'}^\ka
	\left( r^2 S_{\mu\nu\la\ka}
		- \frac{A}{2}
			(\bar{\Omega} \odot \bar{\Omega})_{\mu\nu\la\ka} \right)
	= 0 .
\end{equation}
Substituting in what we already know about $\bar{\Omega}_{\mu\nu}$ and
$A$ into this formula, it reduces to the equality $S_{ABCD} =
\frac{\alpha}{2} (\Omega \odot \Omega)_{ABCD}$ on the warped factor
$(\mathcal{S},\Omega)$. In other words, the \emph{warped} factor is
locally of constant curvature, with sectional curvature $\alpha$. Hence,
our geometry $(\bar{\M}, \bar{g})$ is indeed locally isometric at
$\bar{x} \in \bar{M}$ to a gST geometry $(\bar{\M}',
\bar{g}')_{\alpha,M,\Lambda}$ with parameters $\alpha$, $M$ and
$\Lambda$ (Definition~\ref{def:gST}).

Finally, referring again to the direct calculations from
Section~\ref{sec:gST}, and in particular the
identity~\eqref{eq:gST-Arho}, the last identity
from~\eqref{eq:ideal-gST} implies the equality
\begin{equation}
	(\sgn \alpha) \left|\alpha\right|^{n-1} M^{-2}
	=
	(\sgn \alpha') \left|\alpha'\right|^{n-1} M'^{-2} .
\end{equation}
So, invoking Proposition~\ref{prp:gST-isom}, we can at last conclude
that the gST geometry that we have identified locally about $\bar{x} \in
\bar{\M}$ is indeed isometric to the desired reference gST geometry,
$(\bar{\M}', \bar{g}')_{\alpha,M,\Lambda} \cong (\bar{\M}',
\bar{g}')_{\alpha',M',\Lambda'}$.
\end{proof}

For a version of the above result that holds also when $n=4$ we need
only replace formula~\eqref{eq:Omega-gen} for $\bar{\Omega}_{\mu\nu}$ by
formula~\eqref{eq:Omega-4d} below (recall the discussion around
Equations~\eqref{eq:gST-Omega} and~\eqref{eq:gST-Omega-4d} in
Section~\ref{sec:gST}). Hence, the proof of the following result
proceeds in an exactly analogous way.

\begin{thm} \label{thm:ideal-gST-4d}
Consider a Lorentzian geometry $(\bar{\mathcal{M}}, \bar{g})$, with
$\dim \bar{\mathcal{M}} = n \ge 4$. Then the same statement as in
Theorem~\ref{thm:ideal-gST} holds, with the exception that we must
change the definition
\begin{equation} \label{eq:Omega-4d}
	\bar{\Omega}_{\mu\nu}
	:= -\frac{1}{(n-1)(n-3) \, \rho \, \ell^2} \left(\bar{T}_{\mu\la\nu\ka}
		- \frac{(n-2)(n-3)}{2} \rho \, (\bar{g}\odot \bar{g})_{\mu\la\nu\ka} \right)
		\ell^\la \ell^\ka ,
\end{equation}
while adding the hypotheses that $\ell^2 \not\equiv 0$ on a neighborhood
of $\bar{x}\in \bar{\M}$ and that $\bar{\Omega}_{\mu\nu}$ extends by
continuity to a smooth tensor field on this neighborhood despite
$\ell^2$ possibly vanishing at some points.
\end{thm}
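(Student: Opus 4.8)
The plan is to rerun the proof of Theorem~\ref{thm:ideal-gST} almost verbatim, the only structural change being that the idempotent projector $\bar{\Omega}_{\mu\nu}$ is extracted from the curvature via~\eqref{eq:Omega-4d} instead of~\eqref{eq:Omega-gen}; the reason for the switch is that the factor $(n-4)$ in the denominator of~\eqref{eq:Omega-gen} is unavailable at $n=4$, whereas~\eqref{eq:Omega-4d} is regular there (and, as already remarked around~\eqref{eq:gST-Omega-4d}, remains valid for $n\ge 5$ too). For the forward direction I would again invoke the direct computations of Section~\ref{sec:gST}: formula~\eqref{eq:Omega-4d} coincides with~\eqref{eq:gST-Omega-4d}, which for any gST geometry returns $(r^2\Omega)_{\mu\nu}$. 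Since the explicit gST tensor $r^2\Omega_{\mu\nu}$ is smooth everywhere — in particular across the horizon locus, where $\ell^2 = f/r^2$ vanishes — the right-hand side of~\eqref{eq:Omega-4d}, though nominally singular there, extends smoothly; hence the extra hypothesis that $\bar{\Omega}_{\mu\nu}$ extend by continuity is met automatically, and all the remaining identities in~\eqref{eq:ideal-gST} hold exactly as before.

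For the converse, I would treat $\bar{\Omega}_{\mu\nu}$ as the smooth field furnished by the continuity-extension hypothesis and observe that the entire deductive chain of Theorem~\ref{thm:ideal-gST} is insensitive to how $\bar{\Omega}_{\mu\nu}$ was produced. Concretely: conditions~\eqref{eq:ideal-gST-wp} feed into Proposition~\ref{prp:wp-char} to yield the $2+m$-warped decomposition $\bar{g}_{\mu\nu} = g_{\mu\nu} + r^2\Omega_{\mu\nu}$ with $\ell_\mu = \bnabla_\mu\log|r|$ and $r^2\Omega_{\mu\nu} = \bar{\Omega}_{\mu\nu}$; the semidefiniteness $\bar{\Omega}_{\mu\nu}\ge 0$ together with Lorentzian $\bar{g}$ makes $g_{\mu\nu}$ Lorentzian on the base; the trace of $Z_{\mu\nu\la\ka}=0$ gives the $\Lambda$-vacuum equation and $\Lambda=\Lambda'$; Birkhoff's theorem (Proposition~\ref{prp:birkhoff}) — whose cases (a) and (b) cover the generic and horizon points that arise here, as $\ell_\mu\ne 0$ excludes bifurcation — puts the base into gST form for some $\alpha$, $M$; the identities $\rho = M/r^{n-1}$ and $A=\alpha/r^2$ follow; projecting $Z_{\mu\nu\la\ka}=0$ by $\bar{\Omega}$ on all four indices forces the warped factor to have constant curvature $\alpha$; and the final equation of~\eqref{eq:ideal-gST-mass}, together with~\eqref{eq:gST-Arho} and Proposition~\ref{prp:gST-isom}, closes the argument.

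The main obstacle I anticipate is the factor $1/\ell^2$ in~\eqref{eq:Omega-4d}, which blows up precisely on the horizon locus $\{\ell^2=0\}$. In dimensions $n\ge 5$ one sidesteps this entirely by using~\eqref{eq:Omega-gen}, whose denominator $(\bar{T}\cdot\bar{T})_{\la\ka}{}^{\la\ka}$ stays nonzero at horizons; but at $n=4$ that formula is unavailable, so reaching a horizon point ($\ell^2=0$, $\ell_\mu\ne 0$) forces us to work with the singular expression~\eqref{eq:Omega-4d}. Two points then need care. First, the hypothesis $\ell^2\not\equiv 0$ is exactly what makes~\eqref{eq:Omega-4d} define $\bar{\Omega}_{\mu\nu}$ on a nonempty open set, rather than the wholly indeterminate $0/0$; were $\ell^2$ to vanish identically the formula would carry no information. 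Second, once the continuity extension is posited, $\bar{\Omega}_{\mu\nu}$ is a bona fide smooth field on the whole neighborhood for which the assumed identities~\eqref{eq:ideal-gST-wp} hold, so no limiting procedure is required inside the converse and Proposition~\ref{prp:wp-char} applies directly at $\bar{x}$. In effect, the delicate limit $\ell^2\to 0$ is quarantined into the single standing assumption that $\bar{\Omega}_{\mu\nu}$ extends smoothly, and granting it reduces the proof to the one already given for $n\ge 5$.
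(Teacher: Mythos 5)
Your proposal is correct and follows the same route as the paper, which itself disposes of this theorem by noting that the proof of Theorem~\ref{thm:ideal-gST} carries over verbatim once $\bar{\Omega}_{\mu\nu}$ is extracted via~\eqref{eq:Omega-4d} (valid in all dimensions $n\ge 4$, per the discussion around~\eqref{eq:gST-Omega-4d}), with the added hypotheses quarantining the $\ell^2\to 0$ issue exactly as you describe. Your write-up is in fact more explicit than the paper's about why the continuity-extension hypothesis suffices and is automatically satisfied in the forward direction.
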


\section{Discussion} \label{sec:discuss}

We have given an IDEAL characterization (Theorems~\ref{thm:ideal-gST},
\ref{thm:ideal-gST-4d}) of each spacetime from the family of local
isometry classes \emph{generalized Schwarzschild-Tangherlini (gST)}
spacetimes (Definition~\ref{def:gST}), which consists of maximally
symmetric $\Lambda$-vacuum $2+m$-warped products. In particular, this
family includes the higher dimensional spherically symmetric black
holes, which generalize the $4$-dimensional Schwarzschild solution and
which were first investigated by Tangherlini~\cite{tangherlini}.

Our strategy, inspired by the related recent work on the
characterization of cosmological FLRW spacetimes~\cite{cdk}, was to
first identify a geometric characterization of the $2+m$-warped product
structure in terms of a rank-$m$ symmetric projector $\bar{\Omega}$
(Proposition~\ref{prp:wp-char})~\cite{fs-2+2, gpgl-biconf} and then to identify a
covariant formula for $\bar{\Omega}$ in terms of the curvature of a
given gST geometry. The previously existing IDEAL characterization of
the $4$-dimensional Schwarzschild geometry~\cite{fs-schw, gpgl-schw} relied much
more on an intricate algebraic classification of the Riemann tensor,
special to $4$ dimensions. Unfortunately, we could not generalize the
latter approach to higher dimensions directly. On the other hand, our
general strategy succeeds also in $4$ dimensions
(Theorem~\ref{thm:ideal-gST-4d}) and thus provides an alternative
characterization of the Schwarzschild geometry, which should be compared
to that of~\cite{fs-schw}. We leave such a comparison to future work.

As discussed in~\cite{fhk}, the linearization of the tensors of an IDEAL
characterization of a given reference geometry provides a set of
gauge-invariants with respect to linearized gauge transformations
(diffeomorphisms) of linearized gravity on that geometry. Heuristically,
this set of invariants is a good candidate for being complete, but to be
rigorous its completeness should be proven separately. In the recent
work~\cite{kh-compat}, we have explicitly exhibited (by a different
method) complete sets of linear invariants for each geometry in the gST
family. Relating these invariants to the linearization of the IDEAL
characterization tensors, as well as vice versa, can accomplish two
goals: give a geometric interpretation to the invariants
of~\cite{kh-compat} and to prove the completeness of the linearized
invariants that can be obtained from the present work.

A natural direction for related future work is to extend it to an IDEAL
characterization of other black hole spacetimes. For instance, the
generalization to charged spherical symmetric black holes, the
Reissner-Nordstr\"om geometry and its higher dimensional generalizations,
should be straightforward. A bigger challenge would be to generalize it
to higher dimensional rotating black holes, the Myers-Perry
generalizations of the Kerr geometry, perhaps building on the existing
characterization of the $4$-dimensional Kerr spacetime~\cite{fs-kerr}.
Eventually, it would be interesting to extend the characterization to
the full Kerr-Taub-NUT-(A)dS family~\cite{fkk-review} and higher
dimensional versions.

An important future application of the above results could be an
intrinsic and invariant characterization of asymptotic flatness.
Usually, asymptotic flatness is defined by an asymptotic condition on
the metric in a special coordinate system. On the other hand, this
definition is supposed to capture the asymptotic approach to flatness or
the asymptotic end of a black hole spacetime. Thus, having an on hand an
IDEAL characterization of these reference geometries may give us a
chance to intrinsically and invariantly define asymptotic approach to
them, providing an alternative definition of asymptotic flatness.

\paragraph{Acknowledgments.}
Thanks to Willie Wong and Markus Fr\"ob for helpful discussions, as well
as to Alfonso Garc\'{\i}a-Parrado G\'{o}mez-Lobo for bibliographic
suggestions. Thanks also to the anonymous referees for suggesting
improvements. The author was partially supported by the GA\v{C}R project
18-07776S and RVO: 67985840.

\bibliographystyle{utphys-alpha}
\bibliography{killing}

\end{document}